\newcolumntype{C}[1]{>{\centering}p{#1}}
\newtheorem{theorem}{Theorem}
\newtheorem{rem}{Remark}
\begin{document}

\title{Enhancing AIGC Service Efficiency with Adaptive Multi-Edge Collaboration in A Distributed System}

\author{Changfu~Xu,~\IEEEmembership{Member,~IEEE,}
Jianxiong~Guo,~\IEEEmembership{Member,~IEEE,}
Jiandian~Zeng,~\IEEEmembership{Member,~IEEE,}
Houming~Qiu,
Tian~Wang,~\IEEEmembership{Senior~Member,~IEEE,}
Xiaowen~Chu,~\IEEEmembership{Fellow,~IEEE,}
and Jiannong~Cao,~\IEEEmembership{Fellow,~IEEE}

\thanks{This work is an extended version of the paper \cite{xu2024enhancing}, which has been published at the 44th IEEE International Conference on Distributed Computing Systems (ICDCS 2024), Jersey City, USA.}
\thanks{This work was supported in part by the Joint Funds of the National Natural Science Foundation of China under Grant U25A20436, the National Natural Science Foundation of China (NSFC) (62372047), the Guangxi Key Research \& Development Program (Gui Ke FN2504240036), the Natural Science Foundation of Guangdong Province (2024A1515011323), the Supplemental Funds for Major Scientific Research Projects of Beijing Normal University, Zhuhai (ZHPT2023002), the Fundamental Research Funds for the Central Universities, Ministry of Education Industry-University Cooperation Collaborative Education Project (240904497110437), the HK RGC Collaborative Research Fund (C5032-23GF), and the NSFC/RGC Collaborative Research Scheme (CRS\_PolyU501/23). (\textit{Corresponding Author: Tian Wang.})}
\thanks{Changfu Xu and Houming Qiu are with the School of Software and IoT Engineering, Jiangxi University of Finance and Economics, Nanchang 330013, China. (E-mail: \{xuchangfu, qiuhouming\}@jxufe.edu.cn)}
\thanks{Jianxiong Guo is with the Institute of Artificial Intelligence and Future Networks, Beijing Normal University, Zhuhai 519087, China, and also with the Guangdong Key Lab of AI and Multi-Modal Data Processing, Beijing Normal-Hong Kong Baptist University, Zhuhai 519087, China. (E-mail: jianxiongguo@bnu.edu.cn)}
\thanks{Jiandian Zeng and Tian Wang are with the Institute of Artificial Intelligence and Future Networks, Beijing Normal University, Zhuhai 519087, China. Tian Wang is also with the College of Computer and Data Science, Fuzhou University, China, and the Computer Science and Mathematics, Fujian University of Technology, 350118, Fuzhou, China. (E-mail: \{jiandian, tianwang\}@bnu.edu.cn)}
\thanks{Xiaowen Chu is with the Hong Kong University of Science and Technology (Guangzhou), Guangzhou, China. (E-mail: xwchu@hkust-gz.edu.cn)}
\thanks{Jiannong Cao is with the Department of Computing, Hong Kong Polytechnic University, Hong Kong. (E-mail: csjcao@comp.polyu.edu.hk)}
}

\maketitle

\begin{abstract}
The Artificial Intelligence Generated Content (AIGC) technique has gained significant traction for producing diverse content. However, existing AIGC services typically operate within a centralized framework, resulting in high response times. To address this issue, we integrate collaborative Mobile Edge Computing (MEC) technology to reduce processing delays for AIGC services. Current collaborative MEC methods primarily support single-server offloading or facilitate interactions among fixed Edge Servers (ESs), limiting flexibility and resource utilization across all ESs to meet the varying computing and networking requirements of AIGC services. We propose AMCoEdge, an adaptive multi-server collaborative MEC approach to enhancing AIGC service efficiency. The AMCoEdge fully utilizes the computing and networking resources across all ESs through adaptive multi-ES selection and dynamic workload allocation, thereby minimizing the offloading make-span of AIGC services. Our design features an online distributed algorithm based on deep reinforcement learning, accompanied by theoretical analyses that confirm an approximate linear time complexity. Simulation results show that our method outperforms state-of-the-art baselines, achieving at least an $11.04\%$ reduction in task offloading make-span and a $44.86\%$ decrease in failure rate. Additionally, we develop a distributed prototype system to implement and evaluate our AMCoEdge method for real AIGC service execution, demonstrating service delays that are $9.23\% - 31.98\%$ lower than the three representative methods.
\end{abstract}

\begin{IEEEkeywords}
Edge computing, Adaptive multi-server collaboration, AIGC, Reinforcement learning, Distributed system
\end{IEEEkeywords}

\section{Introduction}
Due to the super ability of automatic creation (e.g., writing an essay and drawing a picture), the Artificial Intelligence Generated Content (AIGC) technique has gained huge attention in both academia and industry \cite{cao2023comprehensive, lin2024blockchain}. A prime example is ChatGPT \cite{van2023chatgpt}, which has been widely used in many fields. However, current AIGC services have high response times due to their centralized architecture and computation-intensive models \cite{wang2023next,liang2025collaborative}. For instance, users must wait 40-60 seconds to generate an image on the Hugging Face platform \cite{huggingface}.

Recently, collaborative Mobile Edge Computing (MEC) has been proposed to provide low service delays for various computation-intensive Internet of Things (IoT) applications, and substantial approaches have been presented over the past several years \cite{wang2023edge, sun2024edgebrain, ye2024galaxy}. Thus, a natural solution leverages collaborative MEC technology to accelerate AIGC services. Current collaborative MEC methods are typically classified into two types: single-server selection and multi-server selection, each characterized by a different number of servers \cite{fan2024collaborative, xu2023smcoedge, han2021tailored}. In single-server selection methods, when tasks are uploaded to the nearest Edge Server (ES), the scheduler allocates task workloads to the local ES, another ES, or the Cloud Server (CS) for processing. Specifically, if the local ES's computational capacity meets the task's computational demands, the task's workload will be processed locally. Otherwise, two solutions arise: \textit{1)} the whole task workload is further offloaded to another ES \cite{huang2022towards,farhadi2021service} or the CS \cite{zhang2023hybrid, he2023priority} for processing; \textit{2)} a portion of the task workload is further offloaded to another ES \cite{chen2021collaborative} or the CS \cite{zhou2022two} for processing while the remainder is processed at the local ES. 

In contrast, multi-server selection methods exclusively involve collaborative processing among multiple (not limited to two) ESs \cite{xu2024dynamic}. For instance, as depicted in Fig.~\ref{Fig1}, a user submits an image repair task to the ESs running AIGC models for processing. In particular, using collaborative MEC methods (e.g., \cite{xu2024dynamic}), a portion of images is repaired locally at ES A, while the remaining images are allocated to the other two ESs, B and C, for collaborative repair, thereby improving image processing efficiency. However, these multi-server selection methods only offload the task to a fixed number of ESs (e.g., a task is only allocated to predefined $k$ ESs for processing \cite{xu2024dynamic}), which lacks flexibility and limits the exploitation of idle resources on all ESs, consequently diminishing the Quality of Service (QoS) of the system. Therefore, it is desirable to develop a method that utilizes all ESs' idle resources for enhancing AIGC service efficiency.

\begin{figure}[!t]
\centering
\includegraphics[width=\linewidth]{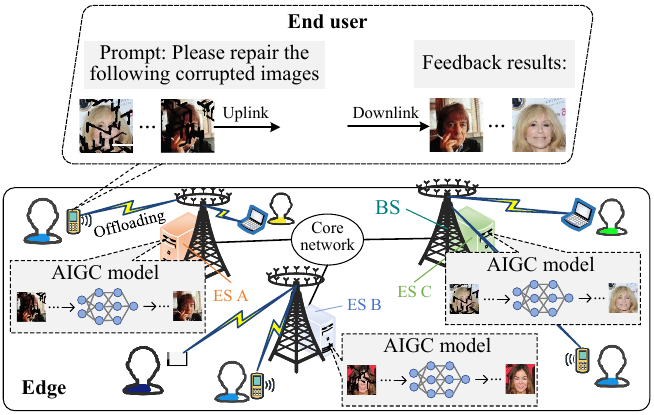}
\caption{An example of AIGC services in collaborative MEC. An end user sends an image-repairing task to its nearest BS. Then, the corrupted images are allocated across three ESs running ACGC models for collaborative repair, reducing processing delays compared to centralized methods.}
\label{Fig1}
\end{figure}

The implementation of the above method, however, poses numerous challenges that warrant further exploration. \textbf{First}, traditional inference services (e.g., object detection) have certain QoS requirements (e.g., accuracy) that are determined by trained network models \cite{zou2025logical}. In contrast, the QoS requirements of AIGC tasks usually vary with user prompt settings. For example, a user can send requests for different quality of generated content. Meanwhile, the AIGC task is also computation-intensive due to its large network model. These characteristics demand varying amounts of computing and networking resources, posing a challenge for allocating them across all ESs while accounting for limited ES resources. \textbf{Second}, MEC systems are typically dynamic, leading to uncertainty about the availability of computing resources at ESs. Thus, in the absence of prior information, determining how to allocate workload is challenging and can delay task offload. \textbf{Third}, the AIGC task typically requires online selection of collaborating ESs and corresponding workload allocation.

Recognizing the above challenges, we first model the offloading problem for Adaptive Multi-server Collaboration in heterogeneous MEC environments, which we call the AM-CMEC problem. Subsequently, we formulate the AM-CMEC problem as an online multivariate Non-Linear Programming (NLP) problem, aiming to minimize the offloading make-span for AIGC tasks. Here, the make-span is determined by the subtasks' longest service delay (including task computing and communication delays) across all ESs. Afterwards, we propose an adaptive multi-edge collaboration method, named \textbf{AMCoEdge}, that solves the AM-CMEC problem in two stages: ES selection and workload allocation. The AMCoEdge fully utilizes the computing and networking resources among all ESs with adaptive multi-ES selection and online workload allocation. Furthermore, we design an Adaptive Deep Q-learning Network (\textbf{ADQN}) model and a Closed-form Workload Allocation (\textbf{CWA}) algorithm to solve the ES selection and workload allocation fraction problems, respectively, achieving near-optimal offloading decisions with approximate linear time complexity. We also develop a distributed edge prototype system to implement and evaluate our AMCoEdge method, validating the AMCoEdge's effectiveness.

This manuscript is a journal extension to our previous conference article \cite{xu2024enhancing}. The primary improvements of this paper are summarized as follows. \textbf{(1)} We propose a Highly Efficient Closed-form Workload Allocation (\textbf{HECWA}) algorithm (seen in Subsection \ref{improved-algorithm}) with corresponding theoretical proof (seen in \textbf{Theorem \ref{effectiveness-Theorem}}) and evaluate the HECWA's cost-effectiveness (Seen in Subsection \ref{cost-effectiveness subsection}). \textbf{(2)} We develop a distributed edge prototype system (seen in Fig. \ref{Fig9}) to implement and evaluate our AMCoEdge method. Also, we include a test-bed experiment (presented in Subsection \ref{test-bed-results}) using real AIGC services to validate AMCoEdge's effectiveness. \textbf{(3)} We provide a detailed discussion (see Section \ref{discussion}) to analyze AMCoEdge's rationale and describe its scalability and limitations, offering insights for future directions in this research field.

The contributions of our work are summarized as follows:
\begin{itemize}
 \item \textbf{Modeling.} We investigate the AM-CMEC problem for providing AIGC services and formulate this problem into an online multivariate NLP problem aimed at minimizing task offloading make-span. We prove the NP-hardness of the offline counterpart of this problem.
 \item \textbf{Method and Algorithm.} We propose an innovative AMCoEdge method that adaptively selects multiple ESs with idle resources to process AIGC tasks, enhancing AIGC service efficiency collaboratively. Meanwhile, we propose an HECWA algorithm to quickly achieve optimal workload allocation decisions, supported by a theoretical proof. We also present theoretical analyses showing that our algorithm achieves near-optimal offloading decisions with approximate linear time complexity.
 \item \textbf{System and Validation.} We develop a prototype system for executing real AIGC services with our AMCoEdge method. Additionally, we evaluate the effectiveness of our AMCoEdge method through comprehensive simulation and test-bed experiments. The simulation results demonstrate that the HECWA algorithm is cost-effective and the AMCoEdge reduces the task offloading make-span and the failure rate by $11.04\%$ and $44.86\%$, respectively, compared to the state-of-the-art baselines. The test-bed results demonstrate $9.23\% - 31.98\%$ fewer service delays than the other three representative methods.
\end{itemize}

The remainder of this paper is organized as follows. Section \ref{sec2} introduces the related work. Section \ref{sec3} gives the system model and problem formulation. Section \ref{sec4} describes the details of the proposed AMCoEdge method and several algorithms with theoretical performance analysis. Sections \ref{evaluation} and \ref{implementation} evaluate the effectiveness of our method. Furthermore, Section \ref{discussion} discusses our method and algorithm. The conclusions are shown in Section \ref{conclusion}.

\section{Related Work}\label{sec2}
Edge computing has significantly amplified the QoS of AIGC services. In this section, we briefly review the literature on cloud-enabled AIGC service, edge-enabled AIGC service, and multi-edge collaboration computing.

\subsubsection{Cloud-enabled AIGC Service}
Currently, AIGC services are enabled with cloud technology. In particular, AIGC services are primarily deployed on the CS, with a centralized framework for client use \cite{midjourney, du2023ai}. For instance, the ChatGPT \cite{van2023chatgpt} application is deployed on the CS by the OpenAI company, where all clients send their requests to the OpenAI CS for centralized processing. Additionally, other AIGC services (e.g., Stable Diffusion (SD) \cite{ho2020denoising} and VideoGPT \cite{maaz2023video} for image generation applications) also use the CS to process clients' requests. However, these AIGC services adopt a centralized framework with limited network bandwidth, requiring substantial computing capacity on the CS. As a result, these services usually have a comparably long response time, especially for many remote clients' requests. 

\subsubsection{Edge-enabled AIGC Service}
Compared with the centralized AIGC framework, edge computing has emerged to provide a lower delay for AIGC services, such as image generation \cite{du2023aigenerated}, next-word prediction \cite{wang2023next}, and text classification \cite{xu2024unleashing}. Most existing work focuses on single-server selection for AIGC. For instance, Du \textit{et al.} \cite{du2024exploring} propose a distributed AIGC method for collaborative MEC, in which AIGC tasks are processed collaboratively across edge devices in wireless networks, thereby improving AIGC QoS. Xu \textit{et al.} \cite{xu2023sparks} present a joint AIGC model caching and inference method in edge networks, reducing offloading latency and utilizing the computing resources of ESs efficiently. Du \textit{et. al.} \cite{du2023aigenerated} propose an AI-generated incentive mechanism and full-duplex semantic communications method that avoids the limited computation power of edge devices for AIGC tasks. However, these methods only allocate task workloads to suitable ESs or CSs for processing, which usually results in low QoS when the task is computation-intensive. 

\subsubsection{Multi-edge Collaboration Computing}
To improve resource utilization in both ES and CS, a partial workload allocation solution for collaborative MEC is further proposed, in which some task workloads are locally processed at the ES and others are further allocated to another ES or the CS for processing \cite{zeng2021coedge}. For example, to reduce the computing delay of task offloading, Li \textit{et al.} \cite{li2020deep} investigate a DRL-based collaborative MEC approach among ESs. However, all these existing methods only support collaborative computing between two ESs or between one ES and the CS to handle task requests. Consequently, the exploitation of idle resources on other ESs will be neglected, thus diminishing the QoS of the entire system. Furthermore, Chu \textit{et al.} \cite{chu2023online} propose a task offloading scheme that uses multiple ESs' computing resources to collaboratively process a task by first partitioning a task into several subtasks offline and then offloading each subtask to a suitable ES for processing. However, this scheme results in disparate processing delays across subtasks across different ESs due to heterogeneous ES resources. As a result, the task processing makespan of this scheme will be extended since the longest processing delay among subtasks determines it. Afterwards, to address the issue of heterogeneous ES resources, Xu \textit{et al.} \cite{xu2024dynamic} design a simultaneous multi-ES offloading method in collaborative MEC. This method simultaneously leverages multiple ESs' idle resources to accelerate task processing by parallel multi-ES selection and dynamic workload partitioning. However, this method only selects a fixed number of ESs, lacking flexibility for heterogeneous tasks. Specifically, when the workload of task requests is low, the task prefers to be processed in the local ES rather than in other ESs, since this reduces transmission delay between the local ES and other ESs. When the workload of task requests is high, the task prefers collaborative processing across all available ESs rather than a fixed number of ESs.

\subsubsection{Comparison Analysis}
 Table I (See Appendix A) summarizes the difference between the current representative methods and our method. Apparently, our method differs from others'. First, current AIGC methods (e.g., \cite{van2023chatgpt}) mainly focus on cloud-enabled AIGC services. In contrast, our method adapts an edge-computing framework to provide AIGC services. Second, although edge-enabled AIGC methods (e.g., \cite{du2024exploring}) have recently been proposed, they allocate a task workload to a single ES for processing, limiting the utilization of other ES resources. However, our method uses multiple ESs to accelerate AIGC task processing. Third, a small number of multi-ES offloading methods (e.g., \cite{xu2024dynamic}) have recently been presented to utilize multiple ESs to collaboratively process tasks simultaneously, but they are only applicable to traditional inference services (e.g., object detection) and only select a fixed number of ESs, limiting the exploitation of idle resources on all ESs. Instead, our method explores adaptive multi-ES offloading to provide AIGC services, selecting idle ESs and optimally allocating task workloads to them for collaborative processing. Also, we propose an HECWA algorithm to quickly achieve optimal workload allocation decisions.

\section{System Model \& Problem Formulation}\label{sec3}
% In this section, we first present the system model of our AM-CMEC problem and then formulate the AM-CMEC problem. Table \ref{table2} summarizes the notations frequently used in this paper.
\subsection{System Model}
Table II (See Appendix A) summarizes the notations frequently used in this paper. Similar to Fig. \ref{Fig1}, we consider AIGC applications in a collaborative MEC system comprising several BSs, each equipped with an ES that deploys the AIGC model. Different ESs have different computing resources. All BSs are connected via a wired core network. BSs provide AIGC services to user devices (UDs) via wireless communication within the BSs' signal range. In a BS $b \in \mathcal{B}$ of our system, as illustrated in Fig. \ref{Fig2}, each ES has a scheduler with a transmission queue and a processing queue. The transmission queue stores task data waiting to be transmitted from the local ES to other ESs.
The processing queue stores tasks waiting to be processed. Let $\mathcal{B} =\{1,\cdots, B\}$ denote the set of ESs or BSs. $b \in \mathcal{B}$ and $B$ denote the $b$-th ES and the total number of ESs, respectively. The system time is split into consecutive time slots and is expressed by $\mathcal{T} =\{1,\cdots, |\mathcal{T}|\}$ \cite{Yu2021when}. $t \in \mathcal{T}$ denotes the $t$-th time slot. The length of the time slot $t$ is denoted as $\Delta$ (in seconds). Next, the task model, decision variable, and make-span model are presented in detail.

\subsubsection{Task Model}
Let $\mathcal{N}_{b,t} = \{1, 2, \cdots, N_{b,t}\}$ denote a set of tasks arrived to BS $b$ at time slot $t$. $n \in \mathcal{N}_{b,t}$ denotes as the $n$-th task. In general, the task $n$ includes the offloading data $d_{n}$ (in bits), the required computation density $\rho_{n}$ (in CPU cycles/bit), and the response deadline $\tau_{n}$ (in seconds). Here, if the task has not been fully processed within $\tau_{n}$ seconds, it will be dropped immediately \cite{tang2022deep}. However, for computationally intensive AIGC tasks executed on GPUs, such as Deep Neural Network (DNN) inference, a more precise computational cost model based on floating-point operations (FLOPs) is necessary. Specifically, the workload of a DNN task $n$ (e.g., ResNet-152 \cite{he2016deep}) is modeled not as $\rho_n \cdot d_n$, but by the total FLOPs across its layers, given by $\sum^{L}_{l=1}2 \cdot J^{\text{height}}_{n,l}J^{\text{width}}_{n,l}C_{n,l}^{\text{in}}C_{n,l}^{\text{out}}M^{\text{height}}M^{\text{width}}$. In this formulation, $J^{\text{height}}_{n,l}$ and $J^{\text{width}}_{n,l}$ denote the height and width of the output feature map at layer $l$, $C_{n,l}^{\text{in}}$ and $C_{n,l}^{\text{out}}$ represent the input and output channels, and $M^{\text{height}}$ and $M^{\text{width}}$ represent the kernel height and width. The subsequent formulations remain consistent for both CPU and GPU workloads.

\subsubsection{Decision Variable}
For each time slot and task, the scheduler decides how many task workload fractions are allocated to all the ESs in the system. Similar to existing works (e.g., \cite{gao2023task, ma2020cooperative, zhou2022two}), let a variable $x_{b,n,t,b'} \in [0, 1]$ denote the workload allocation fraction of task $n$ from the BS $b$ to the ES $b' \in \mathcal{B}$ in time slot $t$. The sum of all the allocation fractions of a task workload should equal 1, i.e.,
\begin{equation}\label{Eq-constrain}
	\sum\nolimits_{b'\in \mathcal{B}}x_{b,n,t,b'}=1,\ \forall b \in \mathcal{B}, n \in \mathcal{N}_{b,t}, t \in \mathcal{T}.
\end{equation}

\begin{figure}[!t]
    \centering
    \includegraphics[width=\linewidth]{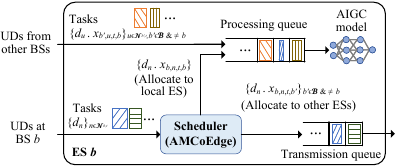}
    \caption{An illustration of the collaborative MEC system model for AIGC services with a BS $b \in \mathcal{B}$. When a task $n$ with data size $d_{n}$ arrives at a BS $b$ at time slot $t$, the scheduler in the BS decides how much task workloads $\{d_{n} \cdot x_{b,n,t,1},\; d_{n} \cdot x_{b,n,t,2},\; \cdots,\; d_{n} \cdot x_{b,n,t,B} \}$ are allocated to which ESs for processing, respectively.}
    \label{Fig2}
    % \vspace{-0.3cm}
\end{figure}

\subsubsection{Make-span Model}
We consider the arrival tasks to be processed on a first-come, first-served basis \cite{tang2022deep}. That is, each arrival task should be processed after its previous arrival task completes in the processing queue. We focus on the processing delay of the task after the BS receives it. Let $T_{b,n,t,b'}^{\text{proc}}$ denote the processing delay of task $n$ that is uploaded to BS $b$ and is offloaded to ES $b'$ in time slot $t$. The $T_{b,n,t,b'}^{\text{proc}}$ includes the task transmission delay over the link from BS $b$ to BS $b'$, the task computing delay on ES $b'$, and the task waiting time in the transmission queue at local BS $b$ and the processing queue at ES $b'$, or equals to the deadline. $T_{b,n,t,b'}^{\text{proc}}$ is formulated as
\begin{align}\label{Eq-process-delay}
	T_{b,n,t,b'}^{\text{proc}} = & \min\{x_{b,n,t,b'} \cdot\left(\frac{d_{n}}{v_{b,b',t}} + \frac{ \rho_{n} \cdot d_{n}}{f_{b'}}\right) \nonumber\\
    & + \mathbb{I}(x_{b,n,t,b'} > 0)\cdot T^{\text{wait}}_{b,n,t,b'},\; \tau_{n}\},
\end{align}
where $\min\{\cdot\}$ denotes the minimum function. $v_{b,b',t}$ (in bits/s) is the transmission rate from BS $b$ to BS $b'$. Here, if the task is offloaded to the local ES, its transmission delay equals 0. $f_{b'}$ (in CPU cycles/s) is the CPU computation capacity of ES $b'$. $\mathbb{I}(\cdot)$ is an indicator function, which equals 1 if $(\cdot)$ is true and 0 otherwise. $ T^{\text{wait}}_{b,n,t,b'}$ is the waiting time of task $n$, which is calculated by 
\begin{equation}\label{waiting-time}
    T^{\text{wait}}_{b,n,t,b'} = \min\left\{\frac{q^{\text{tran,bef}}_{b,n,t}}{v_{b,b',t}} + \frac{q_{t-1,b'}+q^{\text{proc,bef}}_{n,t,b'}}{f_{b'}},\; \tau_{n}\right\},
\end{equation}
where $q^{\text{tran,bef}}_{b,n,t}$ (in bits) is the size of the transmission queue at ES $b$ before sending out task $n$ and $q^{\text{proc,bef}}_{n,t,b'}$ is the length of the processing queue at ES $b'$ before receiving task $n$. They can be achieved by system observation. $q_{t-1,b'}$ (in CPU cycles) is the processing queue workload length of ES $b'$ at the end of time slot $t-1$. The $q_{t,b'} = $
\begin{equation}\label{queue-update}
    \max\{q_{t-1,b'} + \sum_{b \in \mathcal{B}}\sum_{n\in \mathcal{N}_{b,t}} x_{b,n,t,b'} \cdot d_{n} \cdot \rho_{n} - f_{b'} \cdot \Delta,\; 0\},
\end{equation}
where $\max\{\cdot\}$ denotes the maximum function. The task workload of each ES at initial time slot 0 is set to 0, i.e., $q_{0,b'}$ = 0 for $\forall b' \in \mathcal{B}$. The queue data size and workload length of tasks before task $1$ in time slot $0$ are set to 0, i.e., $q^{\text{tran,bef}}_{0,0,b'}$ = 0 and $q^{\text{proc,bef}}_{0,0,b'}$ = 0 for $\forall b' \in \mathcal{B}$.

Furthermore, let $T_{b,n,t}^{\text{make}}$ denote the offloading make-span of task $n$ that arrives at BS $b$ at time slot $t$. Then, the $T_{b,n,t}^{\text{make}}$ is calculated by the longest processing delay of task sub workload at the ESs, i.e.,
\begin{equation}\label{Eq-make-span}
	T_{b,n,t}^{\text{make}} = \max \{T_{b,n,t,b'}^{\text{proc}}\}_{b' \in \mathcal{B}}.
\end{equation}

\subsection{Problem Formulation}
The objective of our AM-CMEC problem is to minimize the sum of all the task offloading make-spans in the whole system, which can be formulated as an online NLP problem:
\begin{align}\label{Eq-AM-CMEC}
	&\min_{\boldsymbol{x}}\lim_{|\mathcal{T}| \rightarrow \infty}\frac{1}{|\mathcal{T}|}\sum\nolimits_{t\in\mathcal{T}} \sum\nolimits_{b\in\mathcal{B}}\sum\nolimits_{n\in\mathcal{N}_{b,t}} T_{b,n,t}^{\text{make}} \\
	&\mbox{s.t.}\quad\text{Eqn.}\; (\ref{Eq-constrain}), \nonumber \\
    &\quad\quad x_{b,n,t,b'} \in [0,1],\; \forall b,b' \in \mathcal{B}, n \in \mathcal{N}_{b,t}, t \in \mathcal{T}.\nonumber
\end{align}
Here, the optimization variable is task workload allocation fraction $\boldsymbol{x} = \{x_{b,n,t,b'}\}_{b,b'\in \mathcal{B}, n\in \mathcal{N}_{b,t},t\in \mathcal{T}}$. This offline counterpart of the AM-CMEC problem is NP-hard as proved by the following \textbf{Theorem \ref{NP-hard-theorem}}.

\begin{theorem}\label{NP-hard-theorem}
The offline counterpart of the AM-CMEC problem is NP-hard.
\end{theorem}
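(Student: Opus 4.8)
The plan is to establish NP-hardness by a polynomial-time reduction from a classical NP-hard problem; the two natural candidates here are \textsc{Partition} (equivalently, minimum-makespan scheduling on two identical machines) and, for a many-server version, \textsc{Bin Packing}/\textsc{3-Partition}. Starting from, say, a \textsc{Partition} instance --- positive integers $a_1,\dots,a_m$ with $\sum_i a_i = 2A$ --- I would construct an offline AM-CMEC instance with a single BS $b$, two ESs $\mathcal{B}=\{b,b'\}$ of equal capacity $f_b=f_{b'}$, and a small number of time slots in which $m$ ``item tasks'' arrive, task $n$ carrying workload $\rho_n d_n = a_n$, negligible offloading data $d_n$, and a common deadline $\tau_n=\tau$ calibrated so that an ES can clear total workload at most the scaled value of $A$ within $\tau$; all remaining parameters ($v_{b,b',t}$, $\Delta$, etc.) are fixed to polynomially bounded constants, and the initial queues are $0$ as required by the model.

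The crux is a structural lemma: in the constructed instance, every task of an optimal (indeed of any deadline-feasible) solution is placed \emph{entirely} on one ES, i.e.\ each $(x_{b,n,t,b},x_{b,n,t,b'})$ is a $0/1$ vector. Here the indicator term $\mathbb{I}(x_{b,n,t,b'}>0)\cdot T^{\text{wait}}_{b,n,t,b'}$ acts as a fixed \emph{activation cost} charged whenever any positive fraction of a task is sent to an ES: by loading the transmission/processing queues with the preceding tasks (so that $q^{\text{tran,bef}}_{b,n,t}$, $q^{\text{proc,bef}}_{n,t,b'}$, or $q_{t-1,b'}$ are strictly positive) and choosing $v_{b,b',t}$ and $\tau$ appropriately, one can make any split of a task strictly increase that task's make-span or violate its deadline, so splitting is never beneficial. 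Granting the lemma, a feasible schedule is exactly a partition of $\{1,\dots,m\}$ between the two ESs; since tasks are processed FCFS in arrival order (Eqs.~\eqref{Eq-process-delay}--\eqref{queue-update}), task $n$'s make-span is the cumulative workload on its ES up to and including $n$ divided by $f_b$, so all deadlines are met iff each ES carries total workload at most the threshold, which --- the grand total being $2A$ --- forces each side to equal $A$. Finally I would pick a threshold $\Theta$ for the decision question ``is $\min_{\boldsymbol{x}}\sum_{n}T^{\text{make}}_{b,n,t}\le\Theta$?'': in a NO-instance some deadline is always missed, contributing $\tau_n$ to the objective, and by scaling one ensures this exceeds every YES-instance value; since the whole construction is polynomial, the offline AM-CMEC decision problem --- and hence the optimization problem of \eqref{Eq-AM-CMEC} --- is NP-hard.

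I expect two points to be the real work. First, the continuity of $x_{b,n,t,b'}\in[0,1]$ means the ``no-splitting'' lemma must genuinely rule out fractional allocations that might balance load more finely than any integral partition; making it airtight requires tuning $v_{b,b',t}$, the $d_n$, and $\tau$ simultaneously so that the activation penalty strictly dominates any make-span saving from a finer split while keeping all quantities polynomially sized, and it interacts with the $\min\{\cdot,\tau_n\}$ truncations and the coupled queue recursion \eqref{waiting-time}--\eqref{queue-update}. Second, because the objective is a \emph{sum} of per-task make-spans each capped at $\tau_n$, one must choose the reduction and its scaling --- e.g.\ reducing from \textsc{3-Partition} with suitably inflated deadlines, or introducing a single high-deadline ``probe'' task whose make-span is small exactly when the partition is balanced --- so that the minimized objective separates YES- from NO-instances cleanly rather than having the two value ranges overlap. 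Once these are pinned down, the remaining equivalence with \textsc{Partition}/\textsc{Bin Packing} is routine.
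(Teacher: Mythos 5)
Your route is genuinely different from the paper's. The paper's proof (Appendix~C) argues by identifying the offline problem with a multi-knapsack instance --- workloads $d_n\rho_n x_{b,n,t,b'}$ as objects, ESs as knapsacks of capacity $f_{b'}$ --- but it does so only after ``relaxing $\boldsymbol{x}$ as integers,'' i.e., it establishes hardness of an integrally-restricted variant and never returns to the continuous problem actually posed in~(\ref{Eq-AM-CMEC}). You correctly diagnose that the continuity of $x_{b,n,t,b'}\in[0,1]$ is the central obstacle, and your plan of reducing from \textsc{Partition}/\textsc{3-Partition} with an explicit instance construction is more principled than the paper's argument.

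That said, there is a genuine gap at exactly the point you flag: the no-splitting lemma \emph{is} the proof, and it is left as a plan rather than an argument. On two identical ESs with empty queues and negligible transmission cost, the optimal fractional policy is a 50/50 split, and perfectly balanced fractional load-balancing is generally computable in polynomial time --- which is precisely why \textsc{Partition} does not reduce to such problems without an integrality gadget. Your proposed gadget (charging the activation term $\mathbb{I}(x_{b,n,t,b'}>0)\cdot T^{\text{wait}}_{b,n,t,b'}$ by pre-loading the queues) must contend with the fact that the waiting times are themselves functions of earlier allocation decisions through the coupled recursion (\ref{waiting-time})--(\ref{queue-update}), so the ``fixed activation cost'' is not fixed; moreover, because the objective is a \emph{sum} of per-task make-spans each truncated at $\tau_n$, a NO-instance might still attain a low objective by deliberately sacrificing a single task, so the YES/NO value ranges are not yet shown to separate. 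Until the lemma is proved with concrete, polynomially bounded choices of $d_n$, $v_{b,b',t}$, $f_{b'}$, and $\tau$, Theorem~\ref{NP-hard-theorem} is not established by your construction --- though, to be fair, the paper's own proof sidesteps rather than resolves the same difficulty.
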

\begin{proof}
    See Appendix C \cite{gu2021layer}.
\end{proof}

\section{Proposed Method and Algorithm}\label{sec4}
% This section describes the details of our proposed method and algorithm. In Subsection \ref{sec4-1}, we present the overall framework of our method and decompose our AM-CMEC problem into two stages: ES selection and workload allocation. Then, we give the specific solving procedures of these two stages in Subsections \ref{sec4-2} and \ref{worload-allocation-section}, respectively. Afterward, we provide the improved algorithm for workload allocation and the implementation of the proposed algorithm with theoretical analyses in Subsections \ref{improved-algorithm} and \ref{sec4-4}, respectively.

\begin{figure*}[!t]
    \centering
    \includegraphics[width=\textwidth]{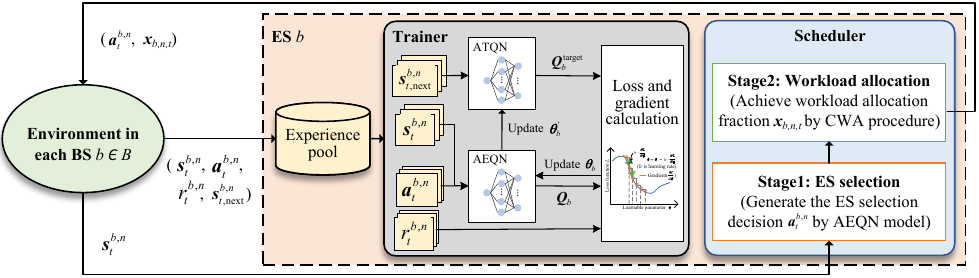}
    \caption{The overall framework of our method in a BS $b \in \mathcal{B}$. When a task $n$ arrives at the BS $b$ at time slot $t$, the scheduler first generates the ES selection decision $\boldsymbol{a}^{b,n}_{t}$ by AEQN model, and then achieves the task workload allocation fraction $\boldsymbol{x}_{b,n,t}$ based on $\boldsymbol{a}^{b,n}_{t}$ by CWA algorithm.}
    \label{Fig3}
\end{figure*}

\subsection{Method Framework}\label{sec4-1}
Driven by the NP-hardness of our problem, we propose a two-stage method to solve the workload allocation decision $\boldsymbol{x}_{b,n,t}$ = $\{x_{b,n,t,b'}\}_{b' \in \mathcal{B}}$. The overall framework is shown in Fig.~\ref{Fig3}. Each BS deploys a trainer and a scheduler in a distributed way. The scheduler includes two stages: ES selection and workload allocation. In the first stage, our method adaptively selects which ESs to collaborate on the AIGC task. Then, in the second stage, our method determines the task-workload fractions allocated to the selected ESs based on the information available for each ES. The details of ES selection and workload allocation are presented as follows.
 
\subsection{Adaptive DQN Design for ES Selection}\label{sec4-2}
Considering heuristic methods (e.g., \cite{lin2024game, sthapit2018computational}) rely on greedy optimization, which easily leads to suboptimal solutions, we design an ADQN model for ES Selection based on the Deep Q-value Network (DQN) model \cite{mnih2015human}. The ADQN model has an Adaptive Evaluation Q-learning Network (AEQN) with parameter $\boldsymbol{\theta}_b$ and an Adaptive Target Q-learning Network (ATQN) with parameter $\boldsymbol{\theta}'_b$. They are trained using historical sample data extracted from the experience pool by the trainer. Notably, unlike the traditional DQN model, which selects only one ES for task offloading, our ADQN model adaptively selects multiple ESs with available resources for each task. Thus, the ADQN model determines the optimal number of multi-ES offloads to ensure that tasks are processed in sequence. The state, action, and reward components of the ADQN model are presented below. 

\textbf{State space.} In our method, at a time slot $t$, each task $n \in \mathcal{N}_{b,t}$ is sequentially processed with one-by-one. Let a vector $\boldsymbol{s}^{b,n}_{t}$ denote the system state when a task $n$ arrives at the BS $b$ at time slot $t$. Intuitively, the size of the arrival task and the length of the task workload at each ES will affect the decision variable and service delay of task offloading. Thus, $\boldsymbol{s}^{b,n}_{t}$ is formulated as
\begin{equation}
	\boldsymbol{s}^{b,n}_{t} = [d_{n}, \boldsymbol{q}_{t-1}],
\end{equation}
where $\boldsymbol{q}_{t-1}=[q_{t-1,1}, q_{t-1,2}, \cdots, q_{t-1,B}]$ are achieved by the Eqn. (\ref{queue-update}). Furthermore, let a variable $\boldsymbol{s}^{b,n}_{t,\text{next}}$ denote the next system state after offloading a task $n$. Since multiple tasks arrive simultaneously at a time slot $t$, the system should allocate the task workload after generating the workload allocation decisions of all tasks arriving in the time slot $t$. Moreover, the queue length $\boldsymbol{q}_t$ will be updated after all tasks arriving in time slot $t$ have been fully assigned for service. Then, the $\boldsymbol{s}^{b,n}_{t, \text{next}}$ is formulated as
\begin{equation}
    \boldsymbol{s}^{b,n}_{t,\text{next}} = \begin{cases}
        \boldsymbol{s}^{b,n+1}_{t}, & 1 \leq n < N_{b,t}.\\
        \boldsymbol{s}^{b,1}_{t+1}, & n = N_{b,t}.
        \end{cases}
\end{equation}
where $\boldsymbol{s}^{b,1}_{t+1}$ represents the system state of the first task in the time slot $t+1$.

\textbf{Action space.} We use a binary vector $\boldsymbol{a}^{b,n}_{t}$ = $[a^{b,n}_{t,1}$, $a^{b,n}_{t,2}$, $\cdots$, $a^{b,n}_{t,B}]$ to represent the action of a task $n$ that is arrived to the BS $b$ at time slot $t$. Here, $a^{b,n}_{t,b'}$ equals 1 if the ES $b'$ is selected for the task offloading and 0 otherwise. Obviously, the $\boldsymbol{a}^{b,n}_{t}$ has $2^{B}-1$ different values since at least one ES is selected during task offloading. Let $\mathcal{A}$ represent the set of all action spaces. We have $\mathcal{A}$ = $\{[1, 0, 0, \cdots, 0] $, $[0, 1, 0, \cdots, 0]$, $\cdots$, $[1, 1, 1, \cdots, 1]\}$. Furthermore, $\boldsymbol{a}^{b,n}_{t}$ is formulated as
\begin{equation}
	\boldsymbol{a}^{b,n}_{t} = \begin{cases} \mathop{\arg\max}\limits_{\boldsymbol{a}\in\mathcal{A}}Q(\boldsymbol{s}^{b,n}_{t},\boldsymbol{a};\boldsymbol{\theta}_{b}),\quad  & \text{w.p.}\; \epsilon,\\
                 \text{A random action from } \mathcal{A},\quad & \text{w.p.}\; 1 - \epsilon, \end{cases}
\end{equation}
where w.p. represents ``with probability''. $\epsilon$ is the greedy probability of ES (i.e., action) selection in our model. Intuitively, with probability $\epsilon$, the scheduler chooses the action that leads to the maximal Q-value under the inputting state. Otherwise, $\boldsymbol{a}^{b,n}_{t}$ is set a random action from $\mathcal{A}$ with probability $1 - \epsilon$.

\textbf{Reward.} Note that the offloading make-span of a task $n$ arrived at BS $b$ at time slot $t$ equals $T^{\text{make}}_{b,n,t}$. Our objective is to minimize the average total offloading make-span across all tasks over all time slots. We hope the less the make-span, the more the reward. Thus, the negative task offloading make-span is defined as the reward in our model. Let a variable $r^{b,n}_{t}$ present the reward of the task $n$ under giving the state $\boldsymbol{s}^{b,n}_{t}$ and action $\boldsymbol{a}^{b,n}_{t}$. We have
\begin{equation}\label{Eq11}
    r^{b,n}_{t}(\boldsymbol{s}^{b,n}_{t},\boldsymbol{a}^{b,n}_{t}) = \begin{cases} -T^{\text{make}}_{b,n,t}, & T^{\text{make}}_{b,n,t} < \tau_{n}.\\
	-10\cdot\tau_{n}, & \mbox{Otherwise}. \end{cases}
\end{equation}
Here, $-10 \cdot \tau_{n}$ is the penalty when the task is not completed before the deadline $\tau_{n}$. Moreover, the penalty value (i.e., $-10 \cdot \tau_{n}$) can be defined by the user, as long as it is significantly lower than the negative value of the maximum task service delay (i.e., deadline $\tau_{n}$), regardless of the specific number. Obviously, the $-10 \cdot \tau_{n}$ is much lower than the $-\tau_{n}$.

\subsection{Algorithm Design for Workload Allocation}\label{worload-allocation-section}
For each task $n\in\mathcal{N}_{b,t}$, when $a^{b,n}_{t,b'} = 0$, the $x_{b,n,t,b'}$ should equals 0. Therefore, we just need to solve the $x_{b,n,t,b'}$ when $a^{b,n}_{t,b'} \neq 0$. Let $\mathcal{L}^{b,n}_t$ denote the index set of selected ESs that contains each $b'\in \mathcal{B}$ with $a^{b,n}_{t,b'} \neq 0$. Then, since the action $\boldsymbol{a}^{b,n}_{t}$ of ES selection has been solved in the first stage, the workload allocation problem can be formulated as
\begin{align}\label{workload-allocation-p}
	&\min_{\boldsymbol{x}_{b,n,t}} \max \{T_{b,n,t,b'}^{\text{proc}}\}_{b' \in \mathcal{L}^{b,n}_t}  \\
	&\mbox{s.t.}\quad \sum\nolimits_{b'\in \mathcal{L}^{b,n}_t}x_{b,n,t,b'}=1.\nonumber
\end{align}

Furthermore, to achieve the minimal solution for the problem (\ref{workload-allocation-p}), we propose a CWA algorithm to achieve the optimal $\boldsymbol{x}_{b,n,t}$. Specifically, we observe that when the $\boldsymbol{a}^{b,n}_{t}$ is solved in the first stage, the subworkloads processing delays $\{T_{b,n,t,b'}^{\text{proc}}\}_{b' \in \mathcal{L}^{b,n}_t}$ of task $n$ and the number $k$ (i.e., $|\mathcal{L}^{b,n}_t|$) of selected ESs are all identified. The minimal offloading make-span of a task exists when each value $T_{b,n,t,b'}^{\text{proc}}$ for $b' \in \mathcal{L}^{b,n}_t$ is equal. Then, based on the information obtained in the ES selection stage, the optimal allocation fraction $\boldsymbol{x}_{b,n,t}$ of the task workload can be determined in the following three cases.

\begin{itemize}
\item[$\bullet$] \textbf{Case 1 ($k=1$):} In this case, only one ES is selected to process a task by the AEQN model. Assuming the ES $b'$ is selected for a task $n$. Then, the entire workload $d_{n}$ is allocated to the ES $b$ for processing. Thus, we have $x_{b,n,t,b'}$ = 1 and other elements in $\boldsymbol{x}_{b,n,t}$ equal to 0.

\item[$\bullet$] \textbf{Case 2} ($k=2$): In this case, two different ESs $i$ and $j$ (i.e., $\mathcal{L}^{b,n}_t$ = $\{i, j\}$) are selected to collaboratively process the workload $d_{n}$ by the AEQN model. At this time, the problem (\ref{workload-allocation-p}) is expressed to 
\begin{alignat}{2}\label{Eq14}
	&\min_{\boldsymbol{x}_{b,n,t}} \max \{T_{b,n,t,i}^{\text{proc}},T_{b,n,t,j}^{\text{proc}}\} \\
	&\mbox{s.t.}\quad x_{b,n,t,i} + x_{b,n,t,j} = 1. \nonumber
\end{alignat}
Obviously, the optimal allocation fraction is existed when $T_{b,n,t,i}^{\text{proc}} = T_{b,n,t,j}^{\text{proc}}$. Then, the problem (\ref{Eq14}) is equal to solve the variables $x_{b,n,t,i}$ and $x_{b,n,t,j}$ by bellowing equations
\begin{equation}
	\begin{cases} T_{b,n,t,i}^{\text{sum}} + T^{\text{wait}}_{b,n,t,i} = T_{b,n,t,j}^{\text{sum}} + T^{\text{wait}}_{b,n,t,j}\\
	x_{b,n,t,i} + x_{b,n,t,j} = 1, \end{cases}
\end{equation}
where  $T_{b,n,t,i}^{\text{sum}} = x_{b,n,t,i}\cdot \left(\frac{d_{n}}{v_{b,i,t}} + \frac{ \rho_{n} \cdot d_{n}}{f_{i}}\right)$ and $T_{b,n,t,j}^{\text{sum}} = x_{b,n,t,j}\cdot \left(\frac{d_{n}}{v_{b,j,t}} + \frac{ \rho_{n} \cdot d_{n}}{f_{j}}\right)$. Afterward, by solving the above two equations (e.g., using Gaussian elimination), the $x_{b,n,t,i}$ and $x_{b,n,t,j}$ can be achieved as 
\begin{align}\label{Eq-x-b-solution}
    x_{b,n,t,b'} = \frac{T^{\text{sum}}_{b,n,t,i}+T^{\text{wait}}_{b,n,t,i}-T^{\text{wait}}_{b,n,t,b'}}{T^{\text{sum}}_{b,n,t,b'}+T^{\text{sum}}_{b,n,t,i}},
\end{align}
and 
\begin{align}\label{Eq-x-i-solution}
    x_{b,n,t,i} = \frac{T^{\text{sum}}_{b,n,t,b'}+T^{\text{wait}}_{b,n,t,b'}-T^{\text{wait}}_{b,n,t,i}}{T^{\text{sum}}_{b,n,t,b'}+T^{\text{sum}}_{b,n,t,i}},
\end{align}
respectively.
\item[$\bullet$] \textbf{Case 3} ($3 \leq k\leq B$): In this case, 3, 4, $\cdots$, or $B$ different ESs are respectively selected to collaboratively process the workload $d_{n}$. Similarly, the optimal allocation fraction exists when subworkload processing delays $\{T_{b,n,t,b'}^{\text{proc}}\}_{b' \in \mathcal{L}^{b,n}_t}$ are equal at the selected 3, 4, $\cdots$, or $B$ ESs. Then, we can formulate 3, 4, $\cdots$, or $B$ equations when 3, 4, $\cdots$, or $B$ ESs are respectively selected. Finally, for $k\in\{3,\cdots,B\}$, the $\boldsymbol{x}_{b,n,t}$ can be achieved by solving these equations with the existing method (e.g., Gaussian elimination), respectively.
\end{itemize}

Based on the above cases, the CWA algorithm is implemented as shown in Algorithm \ref{CWA-procedure}. Algorithm \ref{CWA-procedure} works as follows. Firstly, the fraction $\boldsymbol{x}_{b,n,t}$ of workload allocation is initialized to 0 (Line 1). The index set $\mathcal{L}^{b,n}_t$ and the number $k$ of selected ES are achieved according to the input action of ES selection (Lines 2 and 3). Secondly, if $k=1$, we can get the index $b'$ of the selected ES that is contained in $\mathcal{L}^{b,n}_t$. Furthermore, the workload fraction $x_{b,n,t,b'}$ allocated to the selected ES $b'$ is set to 1. If $k > 1$, the $k$ equations are built according to the previous \textbf{Cases 2} and \textbf{3} (Lines 8-11). Furthermore, the workload allocation fraction $\boldsymbol{x}_{b,n,t}$ can be achieved by solving these equations with the existing method $linalg.solve()$ in the open-source Numpy library (Line 12).

\begin{algorithm}[!t]
\DontPrintSemicolon
  \KwInput{The action $\boldsymbol{a}^{b,n}_{t}$ of ES selection;}
  \KwOutput{The fraction $\boldsymbol{x}_{b,n,t}$ of workload allocation;}
    Initialize variable $\boldsymbol{x}_{b,n,t}$ = 0;\\
    Achieve the $\mathcal{L}^{b,n}_t=\{b'\in\mathcal{B}:a^{b,n}_{t,b'}=1\}$;\\   
    Calculate the $k = |\mathcal{L}^{b,n}_t|$;\\
    \eIf{$k = 1$}{
        $b'\in\mathcal{L}^{b,n}_t$;\\
        $x_{b,n,t,b'} = 1$;\\
    }
    {
        \For{\rm{each $b' \in \mathcal{L}^{b,n}_t$}}
            {
                Calculate the waiting time $T^{\text{wait}}_{b,n,t,b'}$ by Eqn. (\ref{waiting-time});\\
                Calculate the variable coefficient $T_{b,n,t,b'}$ of ES decision, i.e., $T_{b,n,t,b'}$ = $\left(\frac{d_{n}}{v_{b,b',t}} + \frac{ \rho_{n} \cdot d_{n}}{f_{b'}}\right)$;\\
            }
        According to the \textbf{Cases 2} and \textbf{3}, build the corresponding coefficients of $k$ equations based on the achieved $\boldsymbol{T}^{\text{wait}}_{b,n,t}$ = $\{T^{\text{wait}}_{b,n,t,b'}\}_{b' \in \mathcal{B}}$, $\boldsymbol{T}_{b,n,t}$ = $\{T_{b,n,t,b'}\}_{b' \in \mathcal{B}}$, and Eqn. (\ref{Eq-constrain});\\
        Achieve the workload allocation fraction $\boldsymbol{x}_{b,n,t}$ = $linalg.solve$ (the coefficients of $k$ equations);\\
    }
\caption{CWA algorithm}
% \vspace{-0.25cm}
\label{CWA-procedure}
\end{algorithm}

\begin{theorem}\label{CWA-complexity-Theorem}
The time complexity of Algorithm \ref{CWA-procedure} is $O(k^3)$.
\end{theorem}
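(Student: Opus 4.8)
The plan is to bound the running time of Algorithm~\ref{CWA-procedure} line by line and observe that it is dominated by solving a single $k\times k$ linear system. First I would dispatch the cheap bookkeeping: initializing $\boldsymbol{x}_{b,n,t}$ to the zero vector and extracting the index set $\mathcal{L}^{b,n}_t$ together with its cardinality $k=|\mathcal{L}^{b,n}_t|$ from the binary action vector $\boldsymbol{a}^{b,n}_t$ take $O(B)$ time; since $1\le k\le B$ and the nontrivial regime has $k=\Theta(B)$, these terms are absorbed by the later ones. The branch $k=1$ is a single assignment, i.e. $O(1)$.

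Next I would analyze the $k>1$ branch. The \textbf{for} loop iterates once per selected ES, so $k$ times; each iteration evaluates the waiting time $T^{\text{wait}}_{b,n,t,b'}$ through Eqn.~(\ref{waiting-time}) and the coefficient $T_{b,n,t,b'}=\tfrac{d_n}{v_{b,b',t}}+\tfrac{\rho_n d_n}{f_{b'}}$. Because the queue quantities $q^{\text{tran,bef}}_{b,n,t}$, $q^{\text{proc,bef}}_{n,t,b'}$ and $q_{t-1,b'}$ are obtained directly from system observation — they are maintained incrementally via Eqn.~(\ref{queue-update}) — each iteration costs $O(1)$ and the loop costs $O(k)$. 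Assembling the $k$ equations of Cases~2 and~3, namely the normalization equation $\sum_{b'\in\mathcal{L}^{b,n}_t}x_{b,n,t,b'}=1$ together with the $k-1$ equal-delay equations $T^{\text{proc}}_{b,n,t,b'}=T^{\text{proc}}_{b,n,t,b''}$, fills a dense $k\times k$ coefficient matrix, hence $O(k^2)$. Finally, solving that $k\times k$ system by Gaussian elimination (the \texttt{linalg.solve} call) uses $O(k^3)$ arithmetic operations. Summing everything gives $O(B)+O(1)+O(k)+O(k^2)+O(k^3)=O(k^3)$ (and $O(B^3)$ in the worst case $k=B$).

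The step that needs the most care is ruling out hidden costs that would inflate the bound: first, confirming that each waiting-time evaluation is genuinely $O(1)$ rather than requiring a re-scan of the transmission and processing queues, which relies on those lengths being available from observation as the model assumes; and second, confirming that the assembled system is square of dimension exactly $k$ and non-degenerate, so that a single Gaussian-elimination call suffices and no repeated or iterative solve is triggered. With these two points in hand, the line-by-line summation yields the claimed $O(k^3)$ time complexity.
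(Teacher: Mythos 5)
Your proposal is correct and follows essentially the same route as the paper's own proof: a line-by-line accounting in which the selection loop costs $O(k)$ and the dominant term is the $O(k^{3})$ Gaussian elimination in the \texttt{linalg.solve} call, with the $k=1$ branch dispatched in $O(1)$. Your version is somewhat more careful than the paper's (it accounts explicitly for the $O(k^{2})$ system assembly and for the $O(1)$ cost of each waiting-time evaluation), but the decomposition and the key observation are the same.
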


\begin{proof}
    See Appendix D \cite{huang2022throughput}.
\end{proof}

\subsection{Improved Algorithm for Workload Allocation}\label{improved-algorithm}
Compared with our previous work \cite{xu2024enhancing}, this section presents an improved workload-allocation algorithm, along with a corresponding theoretical performance analysis. As shown in \textbf{Theorem \ref{CWA-complexity-Theorem}}, the proposed CWA algorithm has the high time complexity $O(k^{3})$, which affects the running efficiency of our AMCoEdge method. To address this issue, we design an HECWA algorithm to quickly compute workload allocation fractions. In particular, the task's queuing time is deficient, especially for small task workloads and large ESs with high computing capacity. That is, the task's queuing time can be ignored in our system. Thus, the Eqn. (\ref{Eq-process-delay}) is rephrased as 
\begin{equation*}
    T_{b,n,t,b'}^{\text{proc}} = x_{b,n,t,b'} \cdot\left(\frac{d_{n}}{v_{b,b',t}} + \frac{ \rho_{n} \cdot d_{n}}{f_{b'}}\right).
\end{equation*}
At this time, the workload allocation fraction $x_{b,n,t,b'}$ for $b' \in \mathcal{L}^{b,n}_{t}$ can be optimally achieved by the Eqn. (\ref{closed-form-formula}), as shown in the following \textbf{Theorem \ref{effectiveness-Theorem}}. 

\begin{theorem}\label{effectiveness-Theorem}
Given the index set $\mathcal{L}^{b,n}_{t}$ of selected ESs and $\forall\; b'\in \mathcal{L}^{b,n}_{t}$, when the queuing time can be ignored, the optimal workload allocation fraction $x_{b,n,t,b'}$ of our problem (\ref{workload-allocation-p}) is achieved by the following formula:
\begin{equation}\label{closed-form-formula}
    x_{b,n,t,b'} = \begin{cases} \quad \quad \quad \quad 1, & k=1;\\
    \frac{\prod_{u\in\mathcal{L}^{b,n}_{t}\setminus b'} T_{b,n,t,u}}{\sum_{v\in\mathcal{L}^{b,n}_{t}}\prod_{u\in\mathcal{L}^{b,n}_{t}\setminus v} T_{b,n,t,u}}, & k > 1, \end{cases}
\end{equation}
where $\prod$ means continuous multiplication, $u\in\mathcal{L}^{b,n}_{t}\setminus b'$ and $u\in\mathcal{L}^{b,n}_{t}\setminus v$ represent that $u$ belongs to the $\mathcal{L}^{b,n}_{t}$ except the $b'$ and $v$, respectively, and $T_{b,n,t,u} = \frac{d_{n}}{v_{b,u,t}} + \frac{ \rho_{n} \cdot d_{n}}{f_{u}}$.
\end{theorem}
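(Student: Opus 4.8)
The plan is to prove that the closed-form formula \eqref{closed-form-formula} solves the simplified min-max problem \eqref{workload-allocation-p} in which each sub-workload processing delay is $T_{b,n,t,b'}^{\text{proc}} = x_{b,n,t,b'}\cdot T_{b,n,t,b'}$ with the shorthand $T_{b,n,t,u} = \frac{d_{n}}{v_{b,u,t}} + \frac{\rho_{n}\cdot d_{n}}{f_{u}}$. The case $k=1$ is immediate from the constraint \eqref{Eq-constrain}, so I focus on $k>1$. First I would establish the standard min-max balancing observation already invoked in Case~2 and Case~3: because each $T_{b,n,t,b'}^{\text{proc}}$ is strictly increasing in $x_{b,n,t,b'}$ while the sum of the $x_{b,n,t,b'}$ is pinned to $1$, decreasing the largest delay forces some other delay to increase, so the unique minimizer of the maximum is the point at which all active delays coincide. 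Hence the optimal $\boldsymbol{x}_{b,n,t}$ is characterized by the linear system $x_{b,n,t,b'}\cdot T_{b,n,t,b'} = x_{b,n,t,v}\cdot T_{b,n,t,v}$ for all $b',v\in\mathcal{L}^{b,n}_{t}$ together with $\sum_{b'\in\mathcal{L}^{b,n}_{t}} x_{b,n,t,b'} = 1$; I would note strict positivity of every $T_{b,n,t,u}$ (transmission plus computation times are positive) so division is legal and the common value is finite and positive.

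Next I would solve this system explicitly. From the pairwise-equality conditions, every ratio $x_{b,n,t,v}/x_{b,n,t,b'} = T_{b,n,t,b'}/T_{b,n,t,v}$, so fixing a reference index $r\in\mathcal{L}^{b,n}_{t}$ I can write $x_{b,n,t,v} = x_{b,n,t,r}\cdot T_{b,n,t,r}/T_{b,n,t,v}$ for every $v$. Substituting into the normalization constraint gives $x_{b,n,t,r}\cdot T_{b,n,t,r}\cdot\sum_{v\in\mathcal{L}^{b,n}_{t}} \frac{1}{T_{b,n,t,v}} = 1$, i.e. the common delay value is $\Lambda := \bigl(\sum_{v\in\mathcal{L}^{b,n}_{t}} 1/T_{b,n,t,v}\bigr)^{-1}$, and therefore $x_{b,n,t,b'} = \Lambda / T_{b,n,t,b'}$ for every $b'$. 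To match the stated form of \eqref{closed-form-formula}, I would multiply numerator and denominator of $\frac{1/T_{b,n,t,b'}}{\sum_{v} 1/T_{b,n,t,v}}$ by $\prod_{u\in\mathcal{L}^{b,n}_{t}} T_{b,n,t,u}$, which converts $1/T_{b,n,t,b'}$ into $\prod_{u\in\mathcal{L}^{b,n}_{t}\setminus b'} T_{b,n,t,u}$ and each $1/T_{b,n,t,v}$ into $\prod_{u\in\mathcal{L}^{b,n}_{t}\setminus v} T_{b,n,t,u}$, yielding exactly \eqref{closed-form-formula}. A quick sanity check against Case~2 (with $T^{\text{wait}}$ dropped) confirms the formula reduces to the two-ES expression $x_{b,n,t,i} = T_{b,n,t,j}/(T_{b,n,t,i}+T_{b,n,t,j})$.

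Finally I would verify feasibility and optimality of the candidate: each $x_{b,n,t,b'}$ given by \eqref{closed-form-formula} lies in $(0,1]$ since all the product terms are strictly positive and the numerator is one summand of the denominator, and the $x_{b,n,t,b'}$ sum to $1$ by construction, so the point is feasible for \eqref{workload-allocation-p}; it attains objective value $\Lambda$, and by the balancing argument any feasible point with a strictly smaller maximum is impossible, so it is the optimum. The main obstacle, I expect, is not the algebra but stating the balancing/exchange argument rigorously — in particular arguing uniqueness and that an equalized solution always exists within the simplex $\{\boldsymbol{x}\geq 0,\ \sum x = 1\}$ given arbitrary positive coefficients $T_{b,n,t,u}$; a clean way around this is to observe that $\boldsymbol{x}\mapsto\max_{b'} x_{b,n,t,b'} T_{b,n,t,b'}$ is continuous and coercive-in-the-relevant-sense on the compact simplex so a minimizer exists, then show any minimizer must equalize all delays (otherwise shift mass from the argmax coordinate to a non-argmax one to strictly decrease the max, contradicting optimality), after which the explicit solution above is forced and hence unique.
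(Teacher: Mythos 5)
Your proof is correct, but it takes a genuinely different route from the paper's. The paper proves Theorem~\ref{effectiveness-Theorem} by mathematical induction on $k$: the base cases $k=1,2$ are read off from the constraint and from the Case-2 solutions \eqref{Eq-x-b-solution}--\eqref{Eq-x-i-solution} with the waiting terms set to zero, and the inductive step passes from an index set $\mathcal{M}$ to $\mathcal{M}\cup\{j\}$ by rewriting the product formula over the enlarged set. Notably, the paper's argument inherits the equalization condition ("the minimum of the max is attained when all $T^{\text{proc}}_{b,n,t,b'}$ coincide") as an observation stated earlier in Section~\ref{worload-allocation-section} without a separate justification, and its inductive step is largely a re-expression of the target formula rather than a derivation from the induction hypothesis. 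Your approach instead proves the equalization property directly via the exchange argument on the compact simplex (existence of a minimizer by continuity, then mass-shifting from the argmax coordinates to show any non-equalized point is suboptimal), solves the resulting linear system in closed form as $x_{b,n,t,b'}=\Lambda/T_{b,n,t,b'}$ with $\Lambda=\bigl(\sum_{v}1/T_{b,n,t,v}\bigr)^{-1}$, and recovers the product form by clearing denominators. What your route buys is a self-contained optimality and uniqueness argument that does not depend on the Case-2/Case-3 discussion; what the paper's route buys is brevity and direct consistency with the formulas already displayed in the algorithm-design section. Both arrive at the same formula, and your sanity check against the two-ES case confirms agreement.
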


\begin{proof}
See Appendix E.
\end{proof}

Based on the above analyses, the HECWA algorithm is presented in Algorithm \ref{HECWA-algorithm}. The Algorithm \ref{HECWA-algorithm} begins by initializing the variables $T^{\text{sum}}$ (i.e., the sum of the task processing delays at the selected $k$ ESs) and $\boldsymbol{x}_{b,n,t}$ to 0 (Line 1). Then, similar to the Algorithm \ref{CWA-procedure}, the $\mathcal{L}^{b,n}_t$ and $k$ are achieved in lines 2 and 3, respectively. Furthermore, if $k = 1$, the allocation fraction $x_{b,n,t,b'}$ is achieved by the lines 4-6. However, when the $k > 1$, for each $b' \in \mathcal{L}^{b,n}_{t}$, the $x_{b,n,t,b'}$ is calculated according to the Eqn. (\ref{closed-form-formula}) (Lines 7-11).

\begin{theorem}\label{HECWA-complexity-Theorem}
The time complexity of Algorithm \ref{HECWA-algorithm} is $O(k)$.
\end{theorem}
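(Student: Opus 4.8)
The plan is to walk through Algorithm \ref{HECWA-algorithm} block by block, bound the cost of each block, and identify the single block whose na\"ive cost could exceed $O(k)$ --- the loop in Lines 7--11 that evaluates the closed-form expression (\ref{closed-form-formula}) --- then show that this block can be organized so that its total cost is linear in $k$.

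First I would dispatch the cheap steps. Line 1 initializes the scalar $T^{\text{sum}}$ and writes zeros into the fraction vector $\boldsymbol{x}_{b,n,t}$; Lines 2--3 extract the index set $\mathcal{L}^{b,n}_t$ from the binary action vector $\boldsymbol{a}^{b,n}_t$ and its cardinality $k$. Each of these is a single pass over the action entries and costs $O(k)$ (under the same counting convention used for \textbf{Theorem \ref{CWA-complexity-Theorem}}, where the number of selected ESs dominates and $k\le B$). The branch $k=1$ in Lines 4--6 is a single assignment, hence $O(1)$.

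The substantive step is Lines 7--11. Read literally, (\ref{closed-form-formula}) asks, for every $b'\in\mathcal{L}^{b,n}_t$, for the product $\prod_{u\in\mathcal{L}^{b,n}_t\setminus b'}T_{b,n,t,u}$ of $k-1$ terms, plus the sum over $v$ of $k$ such products, which evaluated directly is $\Theta(k^2)$. The key observation I would use is the algebraic identity obtained by dividing numerator and denominator by $P:=\prod_{u\in\mathcal{L}^{b,n}_t}T_{b,n,t,u}$:
\begin{equation*}
x_{b,n,t,b'}=\frac{\prod_{u\in\mathcal{L}^{b,n}_t\setminus b'}T_{b,n,t,u}}{\sum_{v\in\mathcal{L}^{b,n}_t}\prod_{u\in\mathcal{L}^{b,n}_t\setminus v}T_{b,n,t,u}}=\frac{1/T_{b,n,t,b'}}{\sum_{v\in\mathcal{L}^{b,n}_t}1/T_{b,n,t,v}},
\end{equation*}
which is legitimate because every $T_{b,n,t,u}=d_n/v_{b,u,t}+\rho_n d_n/f_u>0$, so $P$ is finite and nonzero and the reciprocals are well defined. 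Hence the algorithm precomputes the single scalar $S:=\sum_{v\in\mathcal{L}^{b,n}_t}1/T_{b,n,t,v}$ in one pass of $O(k)$ arithmetic operations, after which each $x_{b,n,t,b'}=1/(T_{b,n,t,b'}\,S)$ is produced in $O(1)$ time; over the $k$ selected ESs this totals $O(k)$.

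Summing the per-block bounds --- $O(k)$ for Lines 1--3, $O(1)$ for Lines 4--6, and $O(k)$ for Lines 7--11 --- gives an overall running time of $O(k)$, which is the claim. The main obstacle is precisely the rearrangement above: one must notice that the product form in (\ref{closed-form-formula}), although written so that it appears quadratic in $k$, collapses to a ratio of reciprocal sums and therefore admits an $O(k)$ evaluation; a secondary point to be careful about is verifying that the reciprocals and the factored-out product $P$ are well defined, which follows from strict positivity of every $T_{b,n,t,u}$.
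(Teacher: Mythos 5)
Your proof is correct, and it is actually more careful than the paper's own argument. The paper's proof simply observes that Algorithm \ref{HECWA-algorithm} contains two ``for'' loops over $\mathcal{L}^{b,n}_t$ and concludes $O(k)+O(k)=O(k)$; this counts iterations but silently assumes each loop body costs $O(1)$, whereas the body as written computes $\prod_{u\in\mathcal{L}^{b,n}_{t}\setminus b'} T_{b,n,t,u}$, a product of $k-1$ factors, so a literal implementation is $\Theta(k^2)$. You identify exactly this gap and close it by dividing numerator and denominator of (\ref{closed-form-formula}) by $P=\prod_{u\in\mathcal{L}^{b,n}_t}T_{b,n,t,u}$ (legitimate since every $T_{b,n,t,u}>0$), reducing the allocation to $x_{b,n,t,b'}=\bigl(1/T_{b,n,t,b'}\bigr)/\sum_{v}\bigl(1/T_{b,n,t,v}\bigr)$, which is computable with one $O(k)$ pass for the normalizing sum and $O(1)$ per ES thereafter. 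The same effect could be obtained by precomputing $P$ once and forming each numerator as $P/T_{b,n,t,b'}$, but either way the rearrangement is the substantive content that the paper's proof omits: without it the stated $O(k)$ bound does not follow from the pseudocode. Your version both establishes the theorem and indicates how the algorithm should be implemented for the bound to hold.
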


\begin{proof}
See Appendix F.
\end{proof}
\begin{algorithm}[!t]
\DontPrintSemicolon
  \KwInput{The action $\boldsymbol{a}^{b,n}_{t}$ of ES selection;}
  \KwOutput{The fraction $\boldsymbol{x}_{b,n,t}$ of workload allocation;}
    Initialize variables $T^{\text{sum}} = 0$, $\boldsymbol{x}_{b,n,t}$ = 0;\\
    Achieve the $\mathcal{L}^{b,n}_t=\{b'\in\mathcal{B}:a^{b,n}_{t,b'}=1\}$;\\
    Calculate the $k = |\mathcal{L}^{b,n}_{t}|$;\\
    \eIf{$k = 1$}{
        $b' \in \mathcal{L}^{b,n}_{t}$;\\
        $x_{b,n,t,b'} = 1$;\\
    }
    {
        \For{\rm{each $b' \in \mathcal{L}^{b,n}_{t}$}}
             {
             $T^{\text{sum}}$ =  $T^{\text{sum}}$ +  $\prod_{u\in\mathcal{L}^{b,n}_{t}\setminus b'} T_{b,n,t,u}$;\\
             }
        \For{\rm{each $b' \in \mathcal{L}^{b,n}_{t}$}}
            {
            $x_{b,n,t,b'} = \frac{\prod_{u\in\mathcal{L}^{b,n}_{t}\setminus b'} T_{b,n,t,u}}{T^{\text{sum}}}$;\\
            }
    }
\caption{HECWA algorithm}
\label{HECWA-algorithm}
\end{algorithm}

\subsection{Overall Algorithm Implementation}\label{sec4-4}
We present the proposed AMCoEdge method in Algorithm \ref{AMCoEdge-algorithm} to get a full picture of its workflow. The Algorithm \ref{AMCoEdge-algorithm} works as follows. 

1) The variables $actStep$ and $trainStep$ of the action and training steps are initialized to 0, respectively (Line 1). The AEQN and ATQN are initialized with random parameters $\boldsymbol{\theta}_{b}$ and $\boldsymbol{\theta}'_{b}$, respectively (Lines 2)

\begin{algorithm}[!t]
\DontPrintSemicolon
\KwInput{The task data $\{d_{n}\}_{n\in\mathcal{N}_{b,t}, b\in\mathcal{B}, t\in\mathcal{T}}$;}
\KwOutput{The allocation fraction of task workload $\{x_{b,n,t,b'}\}_{b,b'\in\mathcal{B}, n\in\mathcal{N}_{b,t}, t\in\mathcal{T}}$;}
Initialize variables $actStep = 0$, $trainStep = 0$;\\
Initialize the AEQN and ATQN of each BS $b\in \mathcal{B}$ with random $\boldsymbol{\theta}_{b}$ and $\boldsymbol{\theta}'_{b}$, respectively;\\
  \For{\rm{episode = 1, 2, ..., $E$}}
    {Initialize system environment;\\
      \ForEach{\rm{time slot $t\in \mathcal{T}$}}
        {   \For{\rm{all BS $b \in \mathcal{B}$ in parallel}}
            {
                \ForEach{\rm{new arrival task $n \in \mathcal{N}_{b,t}$}}
                {                  
                    Observe the $\boldsymbol{s}^{b,n}_{t}$;\\
                    Generate the $\boldsymbol{a}^{b,n}_{t}$ by the AEQN model;\\
                    Based on $\boldsymbol{a}^{b,n}_{t}$, solve $\boldsymbol{x}_{b,n,t}$ by the proposed CWA or HECWA algorithm;\\
                    Allocate the $\{d_{n}\cdot \boldsymbol{x}_{b,n,t}\}$ to the selected ESs for parallel processing and calculate the $r^{b,n}_{t}$ by Eqn. (\ref{Eq11});\\
                    Observe the $\boldsymbol{s}^{b,n}_{t,\text{next}}$ and store the $(\boldsymbol{s}^{b,n}_{t}$, $\boldsymbol{a}^{b,n}_{t}$, $r^{b,n}_{t}$, $\boldsymbol{s}^{b,n}_{t,\text{next}})$ to experience pool;\\ 
                }       
                       
                $actStep$++;\\
                \If{\rm{$actStep$ $>$ 200 and $actStep$ \% 10 == 0}}
                {     
                    Extract a set of samples $\mathcal{I}$ from the experience pool and train the AEQN model to update its $\boldsymbol{\theta}_{b}$ by minimizing $L(\boldsymbol{\theta}_{b})$ in Eqn. (\ref{loss-function});\\
                    $trainStep$++;\\
                    \If{\rm{$trainStep$ \% $\eta$ == 0}}
                    {
                        Update the ATQN parameter $\boldsymbol{\theta}'_{b}$ by copying $\boldsymbol{\theta}_{b}$;\\ 
                    } 
                }     
            }
            Update $\boldsymbol{q}_{t}$ by the Eqn. (\ref{queue-update});\\
        }     
    }
\caption{Online distributed AMCoEdge algorithm}
\label{AMCoEdge-algorithm}
\end{algorithm}

2) For each episode = $1, 2, \cdots, E$, the algorithm begins to initialize the system environment (Lines 3 and 4).

3) For each time slot $t \in \mathcal{T}$ and task $n \in \mathcal{N}_{b,t}$ in parallel (Lines 5 - 7), the current state $\boldsymbol{s}^{b,n}_{t}$ is observed from the system environment (Line 8). Then, by utilizing the $\boldsymbol{s}^{b,n}_{t}$ as input, the AEQN generates the ES selection action $\boldsymbol{a}^{b,n}_{t}$ (Line 9). Accordingly, based on the achieved $\boldsymbol{a}^{b,n}_{t}$, the workload allocation decision $\boldsymbol{x}_{b,n,t}$ is solved by the proposed CWA or HECWA algorithm (Line 10). Furthermore, the workload $\boldsymbol{x}_{b,n,t} \cdot \rho_{n} \cdot d_{n}$ of the task $n$ is allocated to the selected ESs $\mathcal{L}_{t}^{b,n}$ for collaborative processing. Meanwhile, the reward $r^{b,n}_{t}$ is calculated using the Eqn. (\ref{Eq11}) (Line 11). The next state $\boldsymbol{s}^{b,n}_{t,\text{next}}$ is observed from the system environment and the transition tuple $(\boldsymbol{s}^{b,n}_{t}, \boldsymbol{a}^{b,n}_{t}, r^{b,n}_{t}, \boldsymbol{s}^{b,n}_{t,\text{next}})$ is stored to the experience pool (Line 12). Afterward, the variable $actStep$, which tracks the action steps, is incremented by 1 (Line 13). 

4) All ESs will train their AEQN models in parallel. Specifically, for all BS $b\in\mathcal{B}$ in parallel, if the $actStep$ exceeds 200 (i.e., after the AEQN has executed more than 200 times) and is a multiple of 10 (i.e., after every 10 executions) (Line 14), a bath sample from the experience pool is extracted to train the AEQN model to achieve its model parameter $\boldsymbol{\theta}_{b}$ (Line 15). Here, the values 200 and 10 are set to ensure effective model training based on user-defined real-world conditions \cite{tang2022deep}. In this process, the key idea of AEQN model training is to minimize the difference between the AEQN Q-values and the target Q-values under ATQN, based on these experience samples. Let $\mathcal{I}$ denote the set of samples and the $i^{th}$ sample is $(\boldsymbol{s}^{b,n}_{t}(i)$, $\boldsymbol{a}^{b,n}_{t}(i)$, $r^{b,n}_{t}(i)$, $\boldsymbol{s}^{b,n}_{t,\text{next}}(i))$. Then, based on these experience samples in $\mathcal{I}$, the parameter $\boldsymbol{\theta}_b$ is updated by minimizing the following loss function $L(\boldsymbol{\theta}_{b})$.
\begin{align}\label{loss-function}
    L(\boldsymbol{\theta}_{b}) &= \frac{1}{|\mathcal{I}|}\sum_{i\in\mathcal{I}}\left(Q_{b}(\boldsymbol{s}^{b,n}_{t}(i),\boldsymbol{a}^{b,n}_{t}(i);\boldsymbol{\theta}_{b}) - Q^{\text{target}}_{b,i}\right)^{2},
\end{align}
where $Q^{\text{target}}_{b,i}$ is the target Q-value under ATQN for a sample $i \in \mathcal{I}$ and is derived as
\begin{align}\label{target-Q-value}
    Q^{\text{target}}_{b,i} = r^{b,n}_{t}(i) + \gamma\cdot \mathop{\max}\limits_{\boldsymbol{a}\in\mathcal{A}} Q(\boldsymbol{s}^{b,n}_{t,\text{next}}(i),\boldsymbol{a};\boldsymbol{\theta}'_{b}),
\end{align}
where $\gamma$ is a reward decay factor. 

5) After the above fourth step, the variable $trainStep$, which records the training steps, is increased by 1 (Line 16). Then, if the $trainStep$ is a multiple of $\eta$ (i.e., after every $\eta$ training sessions) (Line 17), the ATQN parameter $\boldsymbol{\theta}'_{b}$ is updated by copying the AEQN parameter $\boldsymbol{\theta}_{b}$ (Line 18). Here, $\eta$ is denoted as the number of training sessions after which ATQN needs to be updated. That is, for every $\eta$ training session, the ATQN parameter $\boldsymbol{\theta}'_{b}$ has to be updated by copying the AEQN parameter $\boldsymbol{\theta}_{b}$. This step aims to keep the ATQN parameter $\boldsymbol{\theta}'_{b}$ up-to-date. As a result, the computation of the target Q-values in Eqn. (\ref{target-Q-value}) can better approximate the long-term minimal make-span in our model.

6) At the end of each time slot, the queue $\boldsymbol{q}_{t}$ is updated according to the Eqn. (\ref{queue-update}). Through the above steps, the near-optimal workload allocation policies $\boldsymbol{x}$ are gradually learned and refined.

\begin{theorem}\label{AMCoEdge-complexity-Theorem}
For each time slot $t$, the time complexity of our Algorithm \ref{AMCoEdge-algorithm} is approximately equal to the linear time complexity $O(N_{b,t})$.
\end{theorem}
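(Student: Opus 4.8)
The plan is to bound, one block at a time, the work performed by Algorithm~\ref{AMCoEdge-algorithm} inside a single time slot $t$, and then observe that exactly one of these blocks scales with $N_{b,t}$ while every other block contributes only an additive constant (all quantities other than $N_{b,t}$ — the number of ESs $B$, the AEQN/ATQN architecture, the replay minibatch size $|\mathcal{I}|$ — being treated as fixed system constants, which is what ``approximately'' in the statement refers to).

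First I would isolate the inner loop over the arriving tasks (Lines~7--12), which executes exactly $N_{b,t}$ times, and argue that each iteration costs $O(1)$ in $N_{b,t}$: observing the state $\boldsymbol{s}^{b,n}_t=[d_n,\boldsymbol{q}_{t-1}]$ is $O(B)$; one forward pass of the AEQN producing $\boldsymbol{a}^{b,n}_t$ is governed only by the fixed network topology and the input dimension $B+1$, hence $O(1)$ in $N_{b,t}$; solving $\boldsymbol{x}_{b,n,t}$ costs $O(k^3)$ for CWA or $O(k)$ for HECWA by Theorems~\ref{CWA-complexity-Theorem} and~\ref{HECWA-complexity-Theorem}, and since $k=|\mathcal{L}^{b,n}_t|\le B$ this too is a constant; dispatching the sub-workloads and evaluating the reward via Eqn.~(\ref{Eq11}) is $O(B)$; and writing the transition tuple into the replay buffer is $O(1)$. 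Summed over the $N_{b,t}$ tasks, this loop contributes $O(B^3 N_{b,t})$ (CWA) or $O(B\,N_{b,t})$ (HECWA), i.e.\ $O(N_{b,t})$ once $B$ is held constant.

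Next I would account for the remaining statements of the time-slot body, all of which sit \emph{outside} the per-task loop. Incrementing $actStep$ is $O(1)$. The training block (Lines~14--18) is guarded by $actStep\bmod 10=0$, and since $actStep$ is incremented only once per (BS, time slot) pair it is triggered at most once within time slot $t$; when triggered it draws the fixed-size minibatch $\mathcal{I}$ and performs a bounded number of forward/backward passes through the fixed-size AEQN, plus at most one copy of $\boldsymbol{\theta}_b$ into $\boldsymbol{\theta}'_b$, so its cost is $O(|\mathcal{I}|)=O(1)$ in $N_{b,t}$. Finally, the queue update of Eqn.~(\ref{queue-update}) sweeps the $B$ queues while accumulating the $x_{b,n,t,b'}d_n\rho_n$ contributions of the arriving tasks, costing $O(B\,N_{b,t})=O(N_{b,t})$. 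Adding the three parts yields $O(N_{b,t})+O(1)+O(N_{b,t})=O(N_{b,t})$, the claimed (approximately) linear bound.

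The main obstacle — and the source of the word ``approximately'' — is the treatment of the deep-learning primitives: a forward or backward pass of the AEQN is not literally $O(1)$, as it depends on the depth, the hidden widths, and $|\mathcal{I}|$, so the argument must explicitly declare the network topology and minibatch size to be constants independent of $N_{b,t}$ and absorb them into the hidden constant. A secondary but essential point is the loop-nesting bookkeeping: one must verify that the training block and the $\boldsymbol{q}_t$ update lie outside the per-task loop, so that they are counted $O(1)$ times rather than multiplied by $N_{b,t}$ (otherwise the bound would degrade to quadratic). Once these two points are pinned down, the linear estimate follows immediately from the per-iteration costs together with Theorems~\ref{CWA-complexity-Theorem} and~\ref{HECWA-complexity-Theorem}.
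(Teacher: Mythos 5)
Your proposal is correct and follows essentially the same route as the paper's proof: count the $N_{b,t}$ iterations of the per-task loop, bound each iteration by the $O(1)$ AEQN inference plus the $O(k^3)$ (CWA) or $O(k)$ (HECWA) workload-allocation cost from Theorems~\ref{CWA-complexity-Theorem} and~\ref{HECWA-complexity-Theorem}, and absorb $k\le B$ as a small constant to obtain $O(N_{b,t})$. Your accounting of the training block and the queue update is in fact slightly more careful than the paper's (which simply treats the second loop as $O(1)$ by parallelism), but the decomposition and the conclusion are the same.
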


\begin{proof}
See Appendix G.
\end{proof}

\begin{rem}
    Based on the previous Subsection \ref{improved-algorithm} and \textbf{Theorem} \ref{HECWA-complexity-Theorem}, our Algorithm \ref{HECWA-algorithm} will quickly achieve the optimal solution of our problem (\ref{workload-allocation-p}) since $1 \leq k \leq B$ and $k$ is usually a small value. Thus, the execution time of workload allocation can be ignored. Furthermore, by combining with Subsection \ref{sec4-4} and \textbf{Theorem} \ref{AMCoEdge-complexity-Theorem}, our Algorithm \ref{AMCoEdge-algorithm} will achieve the near-optimal solution of our AM-CMEC problem within an approximate linear time complexity.
\end{rem}

\section{Performance Evaluation}\label{evaluation}
This section presents the evaluation setting, baselines, and results with insightful analysis. Our simulations run on a Windows 10 computer with 32GB of RAM and an Intel Core i7 2.2 GHz processor, using PyCharm 2023. We have released our source code at \url{https://github.com/ChangfuXu/AMCoEdge}.

\subsection{Experimental Setting}
We consider that the collaborative MEC system has several BSs with different ES computing capacities in our experiments. We use the TensorFlow machine learning framework to implement our ADQN model. For simplicity and convenience, the AEQN and ATQN in our method are both implemented as fully connected neural networks with one input layer, two hidden layers with 20 neurons, and one output layer. The number of episodes $E$ is set to 300 according to the results as shown in Fig. \ref{Fig4}. The greedy probability $\epsilon$ is gradually increasing from 0 to 0.99 with a step of 0.001. The reward decay factor $\gamma$ is 0.9. The storage size of the experience pool at each ES is set to 500. The parameters learning rate, batch size, updating interval $\eta$, and optimizer are set to 0.001, 32, 100, and Adam, respectively. 

% \begin{table}[!t]
% \caption{Default environment parameters in our experiment.}
% \centering
% \begin{tabular}{p{16pt}<{\centering}p{86pt}<{\centering}p{16pt}<{\centering}p{65pt}<{\centering}}
%     \toprule
%     Param. & Value & Param. & Value\\
%     \midrule
%     $B$ & 5 \cite{fan2024collaborative, xu2024dynamic} & $|\mathcal{T}|$ & 60 \cite{chu2023online}\\
%     $N_{b,t}$ & 50 \cite{chu2023online} &  $d_{n}$ & [2,5] Mbits \cite{chu2023online} \\
%     $\Delta$ & 1.0 second \cite{tang2022deep} & $f_{b'}$ & [10,50] GHz \cite{fan2024collaborative} \\ 
%     $\rho_{n}$ & [100,300] cycles/bit \cite{fan2024collaborative} &  $\tau_{n}$ & 1.0 second \cite{tang2022deep} \\ 
%     $v_{b,b',t}$ & [400,500] Mbits/s \cite{fan2024collaborative} &  $p_{n}$ & 0.3 \cite{tang2022deep}\\   
%     \bottomrule
% \end{tabular}
% \label{table3}
% \end{table}

According to \cite{xu2024dynamic, fan2024collaborative, tang2022deep, chu2023online}, the related environment and training parameters in our experiment are set as follows by default unless specified. Our experiments consider a collaborative MEC network with 5 (i.e., $B$ = 5) BSs. The number of tasks $N_{b,t}$ arriving at each BS is set to 50. The task size $d_{n}$ and computation density $\rho_{n}$ are randomly generated in the range [2, 5] Mbits and [100, 300] CPU cycles/bit, respectively. The computing capacity $f_{b'}$ and transmission rate $v_{b,b',t}$ are set to uniform distributions over [10, 50] GHz and [400, 500] Mbits/s, respectively. The total number $|\mathcal{T}|$ of time slots is 60 (i.e., 1 minute). The length $\Delta$ of each time slot $t$ and the processing deadline $\tau_n$ are all set to 1.0 seconds. To simulate the task arrival trace in the real world, the tasks are generated from UDs with probability $p_{n} = 0.3$, i.e., $d_{n} = d_{n}$ w.p. $p_{n}$ and 0 otherwise.

\subsection{Baselines}
We use four baselines to compare our method's performance as follows.
\begin{itemize}
    \item \textbf{RandCoEdge:} A random-based method that randomly selects an ES to process each task with the local ES. For fairness, the RandCoEdge decision $\boldsymbol{x}$ is satisfied with the same constraints as our method in the experiment.
    \item \textbf{DRLCoEdge \cite{li2020deep}:} A representative DRL-based method that selects an ES to collaboratively process each task with the local ES based on DRL.
    \item \textbf{SMCoEdge \cite{xu2023smcoedge,xu2024dynamic}:} A state-of-the-art method that simultaneously selects suitable $k$ ESs to collaboratively process each offloading task based on top-$k$ DQN model. However, although the $k$ can be previously adjusted by the user, it is fixed in execution. In our experiment, we keep the same setting $k=3$ in \cite{xu2023smcoedge,xu2024dynamic}.
    \item \textbf{Optimal:} An optimal method that greedily selects the most suitable ESs to collaboratively process each task by enumerating all action spaces. However, this method is infeasible, since the scheduler cannot know the available computing and networking resources of ESs in a practical MEC system.
\end{itemize}

\subsection{Experimental Results}
This subsection first evaluates the convergence of our method and baselines by varying the number of episodes, as shown in Fig. \ref{Fig4}, where the x-axis shows the episode number and the y-axis shows the average make-span across tasks and time slots in each episode. Then, we compare our method and baselines on task offloading, make-span, and failure rate (i.e., the ratio of dropped tasks to total arrival tasks), as shown in Figs. \ref{Fig5} and \ref{Fig6}, respectively.

\subsubsection{Convergence Analysis of Different Episodes}
The results of Fig. \ref{Fig4} are achieved by varying the number of episodes from 1 to 1000. From Fig. \ref{Fig4}, the average make-spans of DRLCoEdge, SMCoEdge, and AMCoEdge methods initially drop erratically and then stabilize with the increase of episodes from 1 to 1000. The average make-spans of the RandCoEdge and Optimal methods stabilize as the episode increases. However, our AMCoEdge method achieves the lowest make-span and outperforms RandCoEdge, DRLCoEdge, and SMCoEdge by about \textbf{28\%}, \textbf{17\%}, and \textbf{12\%}, respectively. Moreover, after 300 episodes, the average make-span of our method is close to that of the Optimal method. These results demonstrate that our AMCoEdge method effectively converges to a stable result and outperforms the RandCoEdge, DRLCoEdge, and SMCoEdge baselines in terms of average make-span.

\begin{figure}[!t]
    \centering
    \includegraphics[width=2.5in]{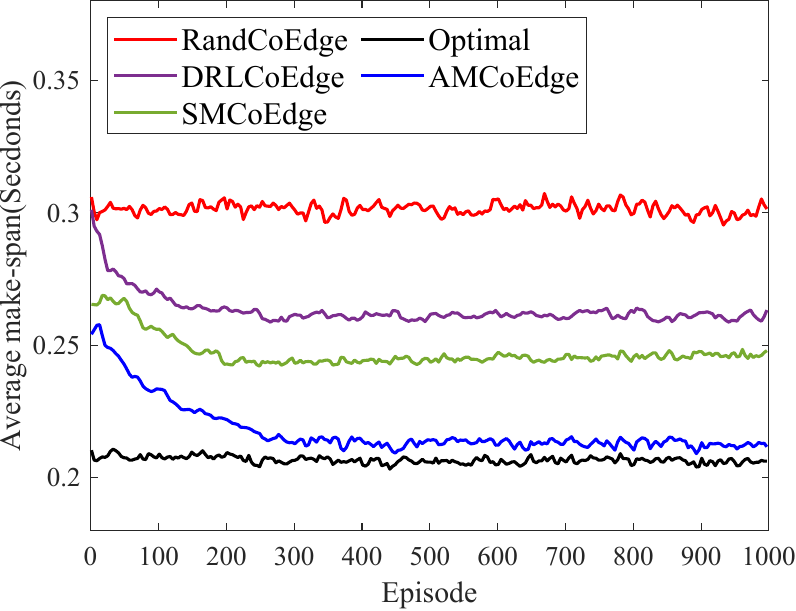}
    \caption{The make-span results of our method and baselines by varying the number of episodes.}
    \label{Fig4}
    % \vspace{-0.2cm}
\end{figure}

\subsubsection{Make-span Effects of Different Parameters}
In this subsection, we take a comparative experiment of our AMCoEdge method and baselines in terms of make-span by varying four parameters $N_{b,t}$, $p_{n}$, $f_{b'}$, and $\tau_{n}$, as given in Fig. \ref{Fig5}.

Especially, Fig. \ref{Fig5.sub.1} shows that the average make-spans of all five methods gradually increase as $N_{b,t}$ increases. However, the average make-span of our method is always lower than that of the RandCoEdge, DRLCoEdge, and SMCoEdge methods, and is slightly higher than that of the Optimal method. More precisely, when $N_{b,t}$ increases from 10 to 100, the average make-spans for the RandCoEdge, DRLCoEdge, and SMCoEdge methods increase from 0.0848 to 0.5358 seconds, 0.0721 to 0.4882 seconds, and 0.0634 to 0.4615 seconds, respectively. In contrast, the average make-span of our method rises only from 0.0514 to 0.4361 seconds, which outperforms that of the RandCoEdge, DRLCoEdge, and SMCoEdge methods by an average of \textbf{26.18\%}, \textbf{16.68\%}, and \textbf{11.04\%}, respectively. Meanwhile, the average make-span of the Optimal method increases from 0.0492 to 0.4125 seconds as $N_{b,t}$ increases from 10 to 100, and is only 5.83\% lower than that of our method on average. Additionally, when $N_{b,t}$ = 50 by default setting, the Optimal method just achieves a reduction of 0.0079 seconds in make-span. These behaviors are primarily attributed to AMCoEdge’s ability to adaptively select the near-optimal multiple ESs and simultaneously utilize their computing capacities to collaboratively process tasks, thereby reducing task offloading, make-span, and failure rate.

\begin{figure*}[!t]
	\centering 
	\subfigure[]{
		\label{Fig5.sub.1}
		\includegraphics[width=1.7in]{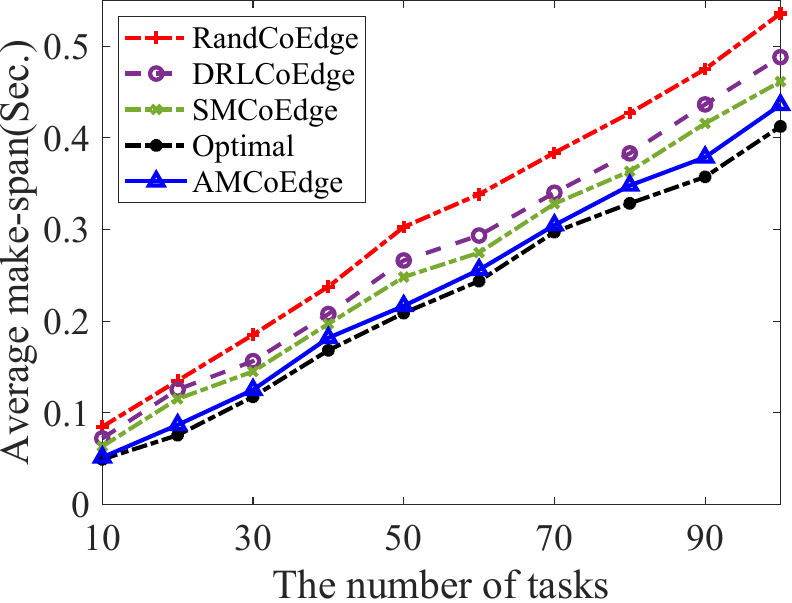}}
        \subfigure[]{
		\label{Fig5.sub.2}
		\includegraphics[width=1.7in]{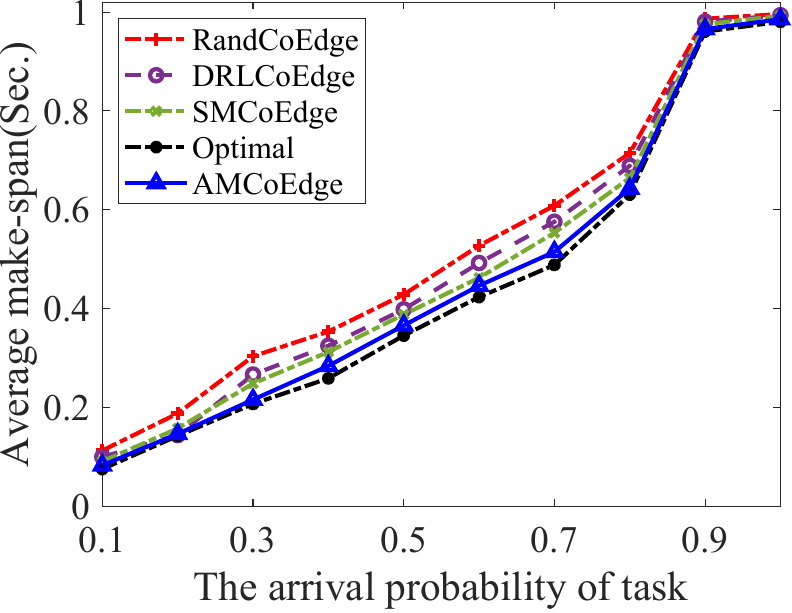}}
        \subfigure[]{
		\label{Fig5.sub.3}
		\includegraphics[width=1.7in]{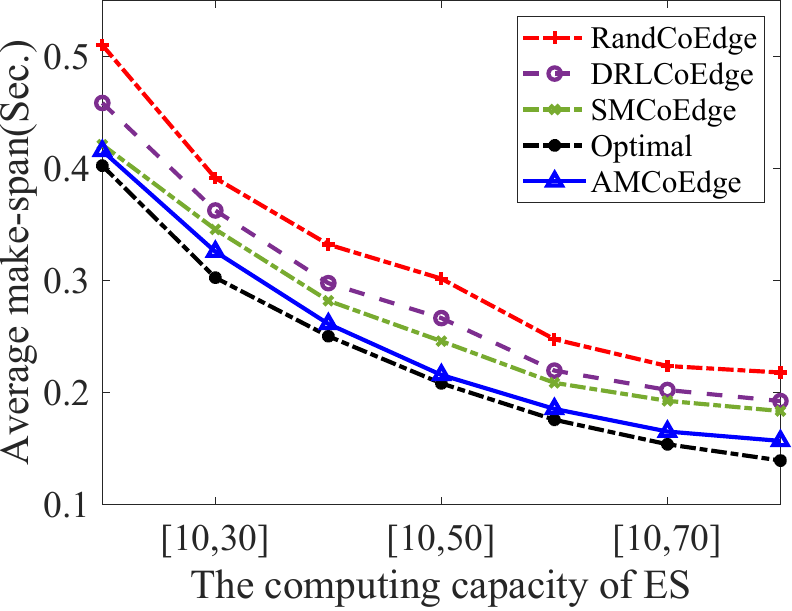}}
        \subfigure[]{
		\label{Fig5.sub.4}
		\includegraphics[width=1.7in]{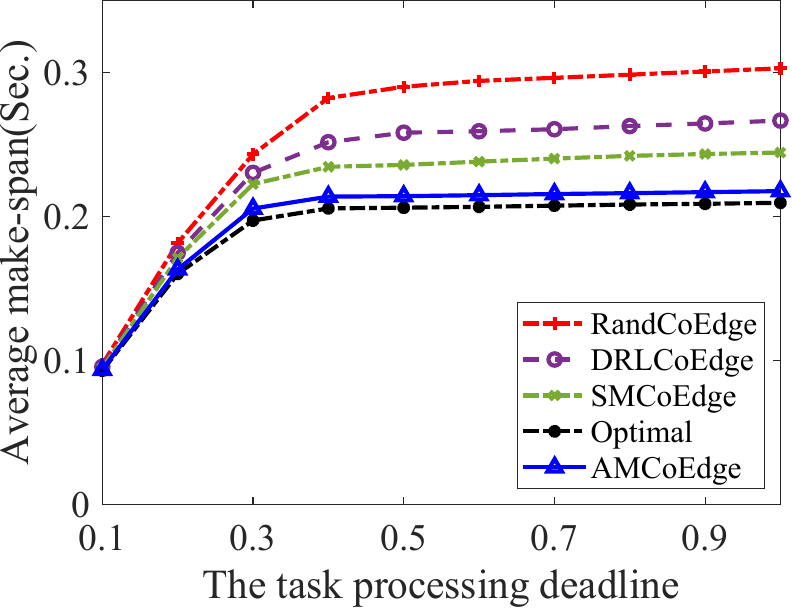}}
	\caption{Make-span comparisons of our method and baselines by varying four different parameters. (a) Varying the number $N_{b,t}$ of tasks. (b) Varying the arrival probability $p_{n}$ of tasks. (c) Varying the computing capacity $f_{b'}$ of ES. (d) Varying the task processing deadline $\tau_{n}$.}
	\label{Fig5}
\end{figure*}

\begin{figure*}[!t]
\centering 
\subfigure[]{
    \label{Fig6.sub.1}
    \includegraphics[width=1.7in]{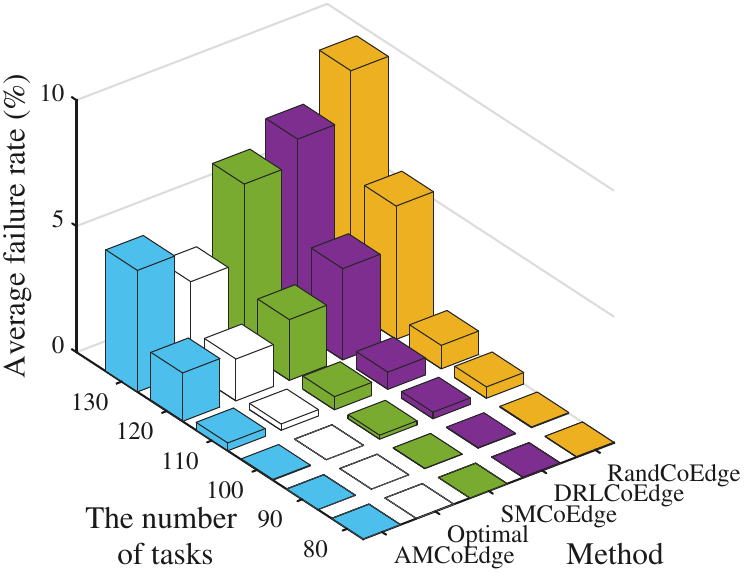}}
\subfigure[]{
    \label{Fig6.sub.2}
    \includegraphics[width=1.7in]{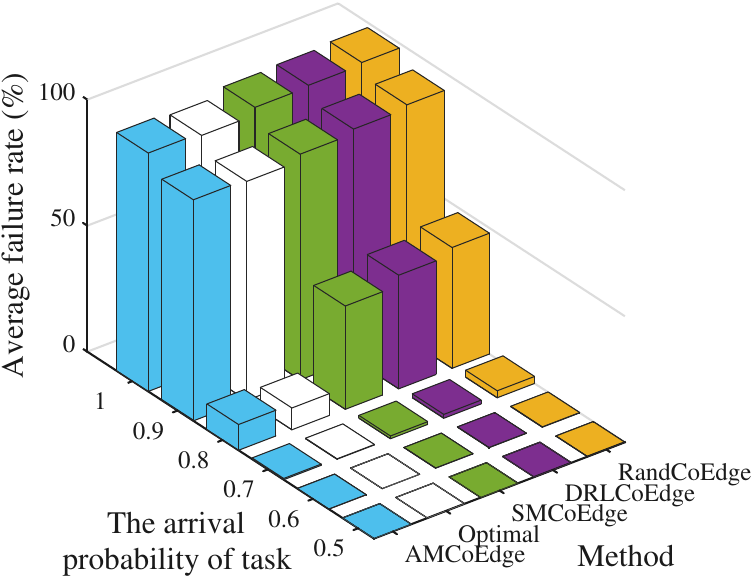}}
    \subfigure[]{
    \label{Fig6.sub.3}
    \includegraphics[width=1.7in]{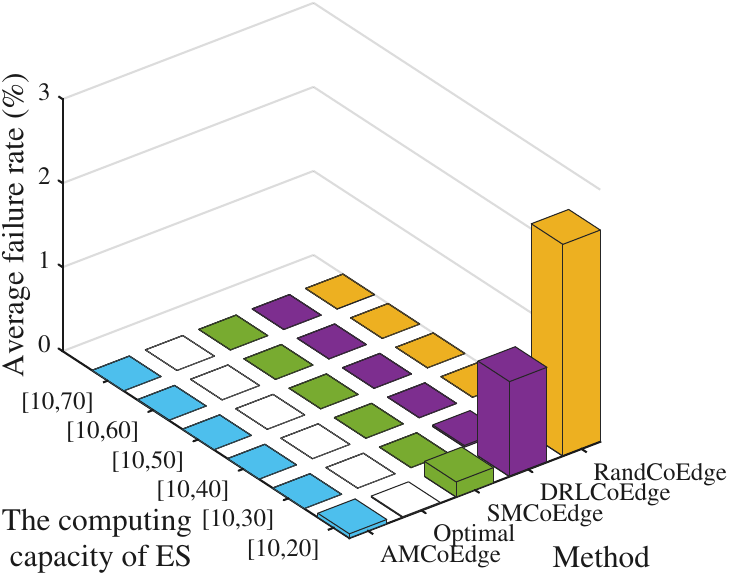}}
    \subfigure[]{
    \label{Fig6.sub.4}
    \includegraphics[width=1.7in]{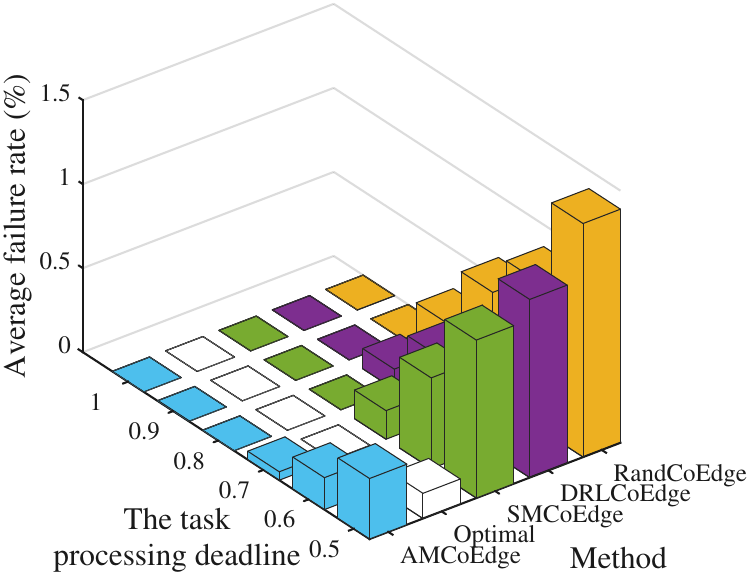}}
\caption{Failure rate comparisons of our method and baselines by varying four different parameters. (a) Varying the number $N_{b,t}$ of tasks. (b) Varying the arrival probability $p_{n}$ of tasks. (c) Varying the computing capacity $f_{b'}$ of ES. (d) Varying the task processing deadline $\tau_{n}$.}
\label{Fig6}
 % \vspace{-0.1cm}
\end{figure*}

Furthermore, we explore the average make-spans of all five methods as the arrival probability $p_{n}$ of tasks is varied, as shown in Fig. \ref{Fig5.sub.2}. From Fig. \ref{Fig5.sub.2}, we can see that the average make-spans of all five methods first increase gradually as $p_{n}$ increases from 0.1 to 0.8. After $p_{n}$ = 0.8, the average make-spans increase rapidly and tend towards stability until $p_{n}$ = 0.9. Apparently, our method consistently achieves the lowest make-span among RandCoEdge, DRLCoEdge, and SMCoEdge, and approximates the Optimal method's results as $p_{n}$ increases from 0.1 to 1.0. This is because as $p_{n}$ increases, the workloads at the ESs can increase. Then, when workloads exceed a certain threshold (e.g., all ESs' computing capacities), the make-span increases rapidly and stabilizes at the deadline.

Fig. \ref{Fig5.sub.3} indicates that as $f_{b'}$ increases from the range [10, 20] to [10, 80] GHz, the average make-spans of our method and four baselines all decrease gradually. However, our AMCoEdge method consistently exhibits the lowest make-span compared to the four baselines. Importantly, the make-span of our AMCoEdge method approximates the results of the Optimal method. Intuitively, with increased ES computing capacity, task processing speeds can be expedited, resulting in a lower make-span.

Also, we find that the average make-spans of our AMCoEdge method and four baselines are all first to decrease gradually as $\tau_{n}$ increases from 0.1 to 0.4, as shown in Fig. \ref{Fig5.sub.4}. Then, the average make-spans of them go to stabilization as $\tau_{n}$ increases from 0.4 to 1.0. However, our AMCoEdge method consistently yields the lowest make-span among the other three baselines, except for the Optimal method. Significantly, the average make-spans of our method closely approximate those of the Optimal method as $\tau_{n}$ increases. This behavior occurs because, with a larger deadline, nearly all tasks can be successfully processed, resulting in only a minor impact on the make-span when the deadline exceeds 0.4. Furthermore, the default setting of $\tau_{n}$ = 1.0 ensures a high offloading successful rate when the task is computation-intensive in the system.

\subsubsection{Failure Rate Effects of Different Parameters}
We also explore the effects of four parameters $N_{b,t}$, $p_{n}$, $f_{b'}$, and $\tau_{n}$ on the task offloading failure rate of our method and baselines. The experimental results are given in Fig. \ref{Fig6}.

In Fig. \ref{Fig6.sub.1}, it is evident that when the $N_{b,t}$ increases from 80 to 130, the offloading failure rates for RandCoEdge, DRLCoEdge, and SMCoEdge methods exhibit substantial increments from 0\% to 9.49\%, 0 to 7.59\%, and 0 to 6.62\%, respectively. In contrast, the offloading failure rates of our SMCoEdge method increase only from 0 to 4.82\%, representing the lowest failure rate compared to RandCoEdge, DRLCoEdge, and SMCoEdge methods and outperforming them by the average values of \textbf{78.82\%}, \textbf{54.38\%}, and \textbf{44.86\%}, respectively. Moreover, our method's offloading failure rates closely approximate the Optimal method's results (from 0 to 3.56\%) as $N_{b,t}$ increases from 80 to 130. Note that the failure rate of our method remains 0 across $N_{b,t}$ = 10-80, ensuring a low offloading rate in the default setting of $N_{b,t}$ = 50.

Fig. \ref{Fig6.sub.2} indicates that the average failure rates of all five methods increase gradually as the $p_{n}$ increases from 0.5 to 1. The average failure rate of our method is significantly lower than that of the RandCoEdge, DRLCoEdge, and SMCoEdge methods. The failure rates of our method always approximate those of the Optimal method as $p_{n}$ increases from 0.5 to 1.0, and only reach 0.48\% when $p_{n}$ equals 0.7.

Furthermore, Fig. \ref{Fig6.sub.3} reveals that the average failure rates of all five methods decrease gradually as $f_{b'}$ increases from 20 to 70. However, our method achieves the lowest failure rate compared to the RandCoEdge, DRLCoEdge, and SMCoEdge methods. Additionally, the failure rate of our method only exhibits slight increments from 0 to 0.05\% as $f_{b'}$ decreases from 70 to 20. Moreover, these results achieve a near performance compared with the Optimal method.

Besides, Fig. \ref{Fig6.sub.4} shows that as $\tau_{n}$ increases from 0.5 to 1.0, the average failure rates of all five methods decrease gradually. However, our method's failure rate is the lowest among RandCoEdge, DRLCoEdge, and SMCoEdge.

\subsubsection{Further Performance Evaluation on Our Method}\label{waiting-computing-transmission-delay-subsection}
Compared to our previous conference article \cite{xu2024enhancing}, this subsection evaluates the performance of our AMCoEdge method in terms of waiting delay, computing delay, and transmission delay across different task sizes, numbers of tasks, and ES computing capacities, as shown in Fig. \ref{Fig7}.

\begin{figure*}[!t]
\centering 
\subfigure[$d_{n}\sim\lbrack2, 5\rbrack$ \& $f_{b}\sim\lbrack10, 50\rbrack$]{
    \label{Fig7.sub.1}
    \includegraphics[width=1.7in]{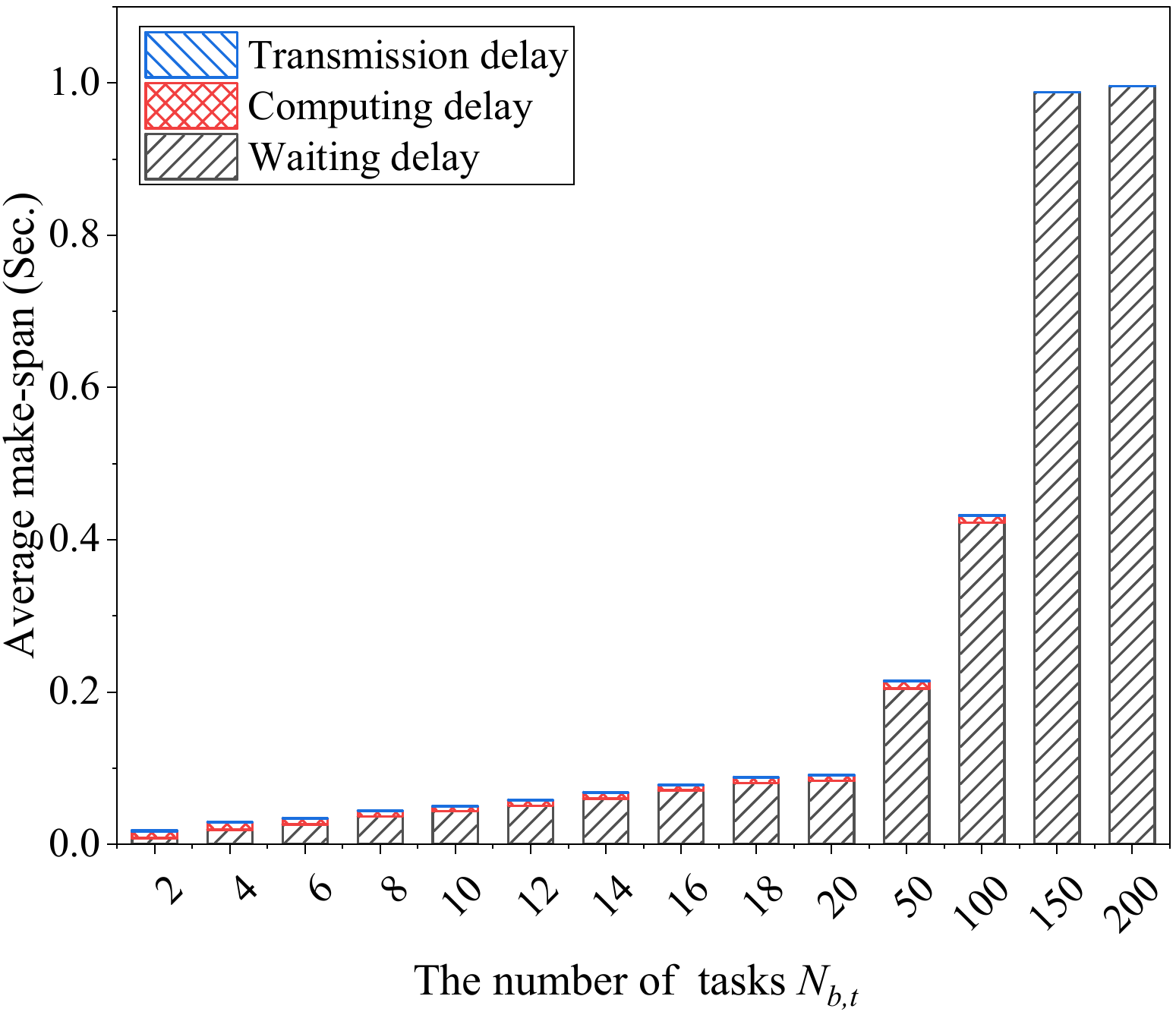}}
\subfigure[$d_{n}\sim\lbrack20, 50\rbrack$ \& $f_{b}\sim\lbrack10, 50\rbrack$]{
    \label{Fig7.sub.2}
    \includegraphics[width=1.7in]{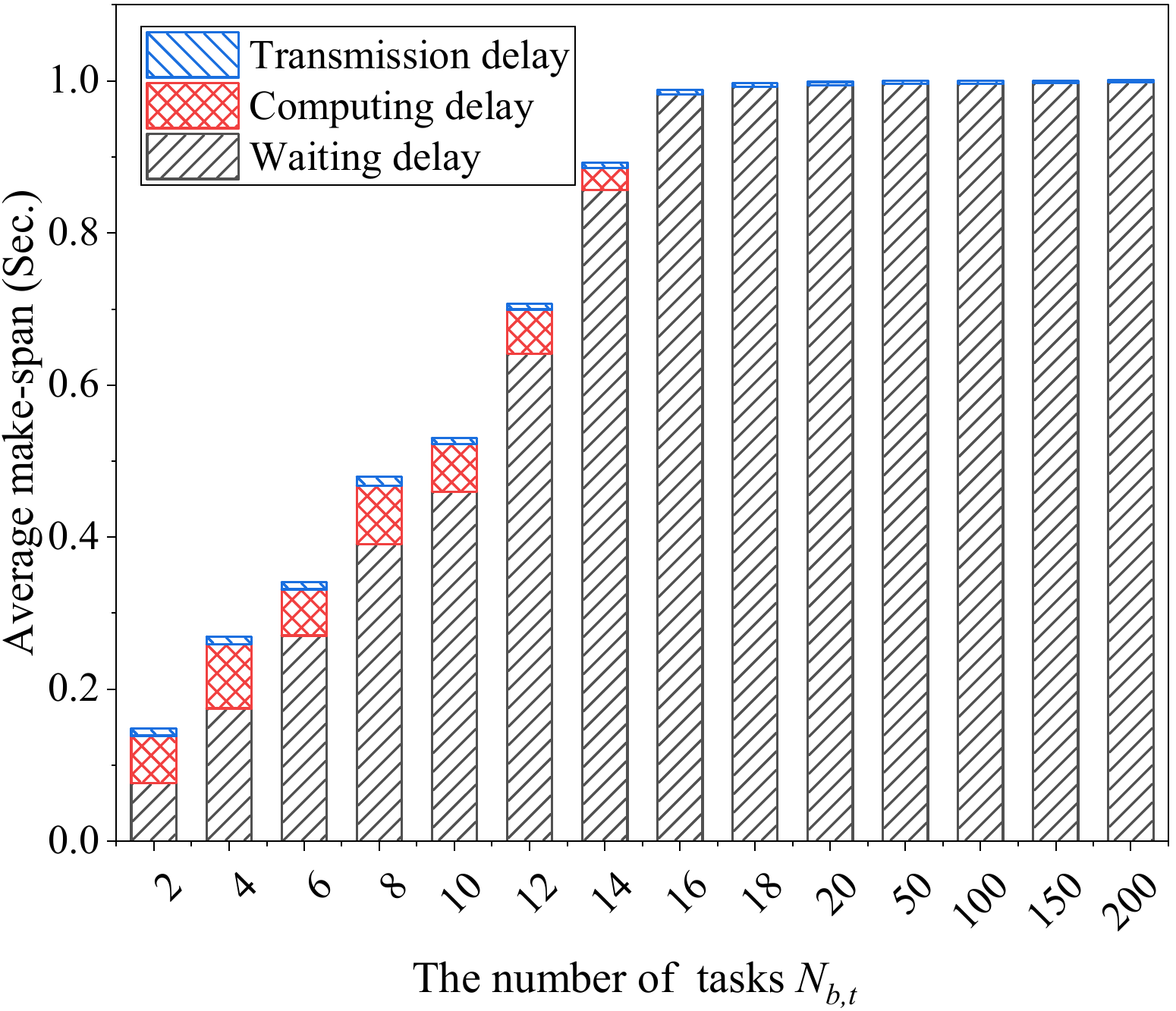}}
    \subfigure[$d_{n}\sim\lbrack2, 5\rbrack$ \& $N_{b,t}= 50$]{
    \label{Fig7.sub.3}
    \includegraphics[width=1.7in]{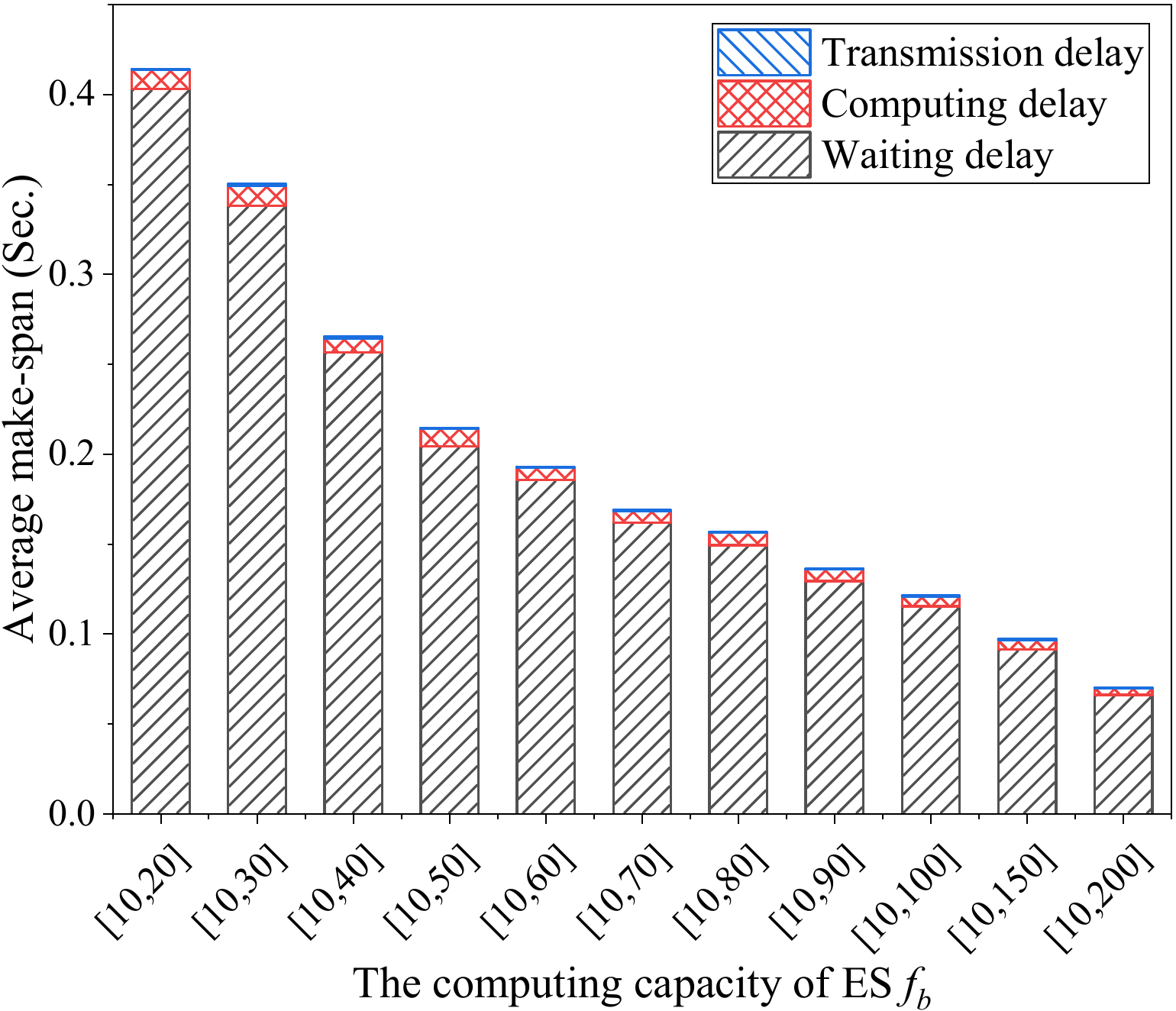}}
    \subfigure[$d_{n}\sim\lbrack20, 50\rbrack$ \& $N_{b,t} = 10$]{
    \label{Fig7.sub.4}
    \includegraphics[width=1.7in]{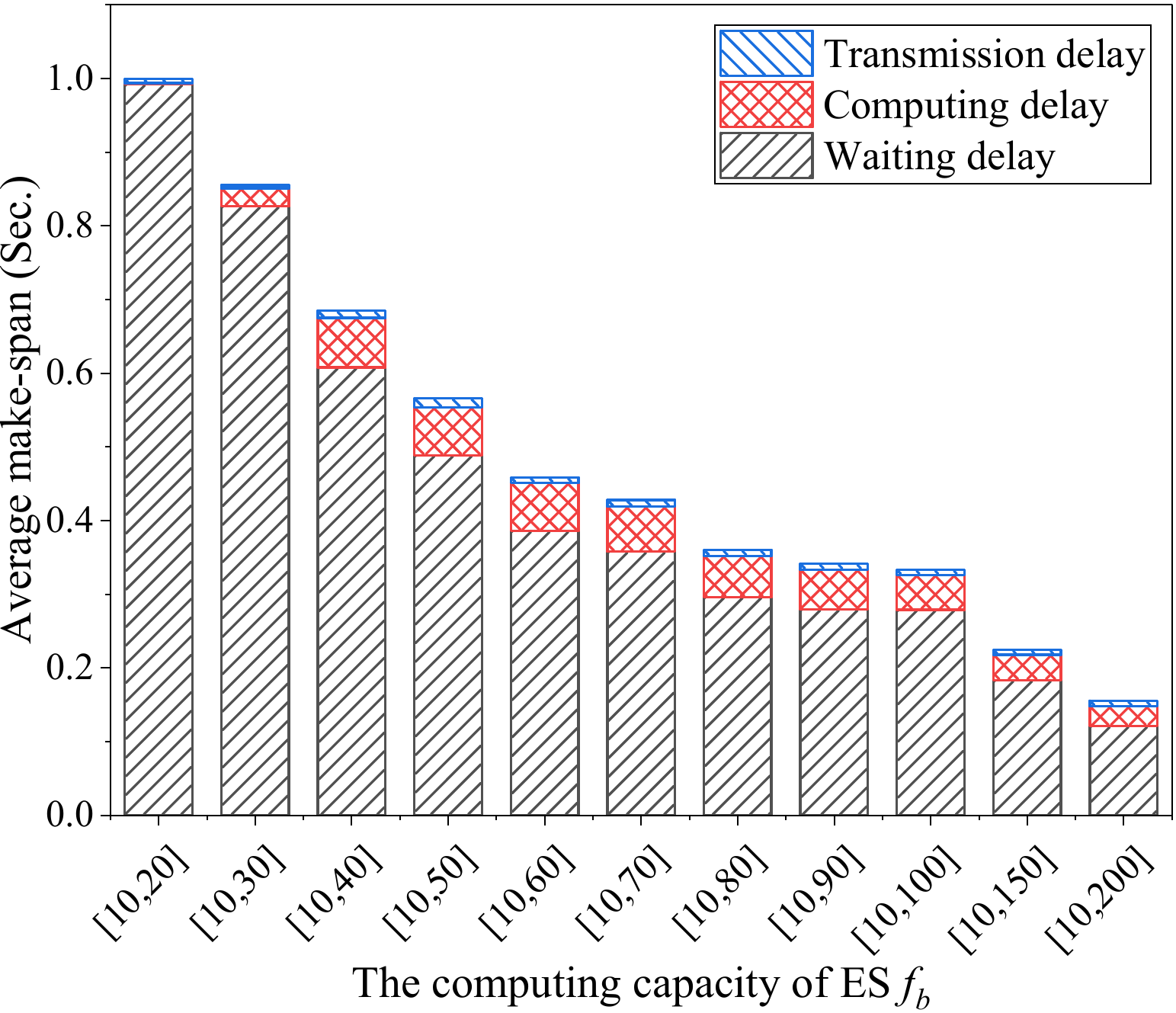}}
\caption{The performance of our method in terms of the waiting delay, computing delay, and transmission delay by varying different task sizes $d_{n}$, number of tasks $N_{b,t}$, and computing capacities of ESs $f_{b}$.}
\label{Fig7}
 % \vspace{-0.1cm}
\end{figure*}

Figs. \ref{Fig7.sub.1} and \ref{Fig7.sub.2} present that the average offloading make-span of the proposed method increases with the number of tasks $N_{b,t}$ ranging from 2 to 200, exhibiting a trend consistent with that observed in Fig. \ref{Fig5.sub.1}. Notably, the make-span stabilizes when $N_{b,t}$ becomes sufficiently large, as the service delay approaches the task offloading deadline $\tau_{n}$, beyond which tasks cannot be completed. Furthermore, when the task size $d_n$ increases from the range [2, 5] Mbits to [20, 50] Mbits, the number of tasks reaching the deadline decreases significantly from approximately 150 to 20. This reduction occurs because larger task sizes limit the number of tasks an ES can process effectively. In addition, as the number of tasks grows, the proportion of waiting delay in the total offloading makespan also increases, eventually dominating the overall task completion time. This behavior aligns with expectations, since a higher number of tasks leads to longer queuing times before processing.

From Figs. \ref{Fig7.sub.3} and \ref{Fig7.sub.4}, it is observed that the average make-span gradually decreases as the ES computing capacity $f_b$ increases from the range [10, 20] GHz to [10, 200] GHz, which aligns with the trends presented in Fig. \ref{Fig5.sub.3}. Moreover, under a smaller task size and a larger number of tasks, the proportion of waiting delay in the total make-span becomes more pronounced, indicating that system congestion due to task volume has a dominant effect under such conditions.

\begin{figure}[!t]
\centering 
\subfigure[Make-span comparisons]{
    \label{Fig8.sub.1}
    \includegraphics[width=0.48\linewidth]{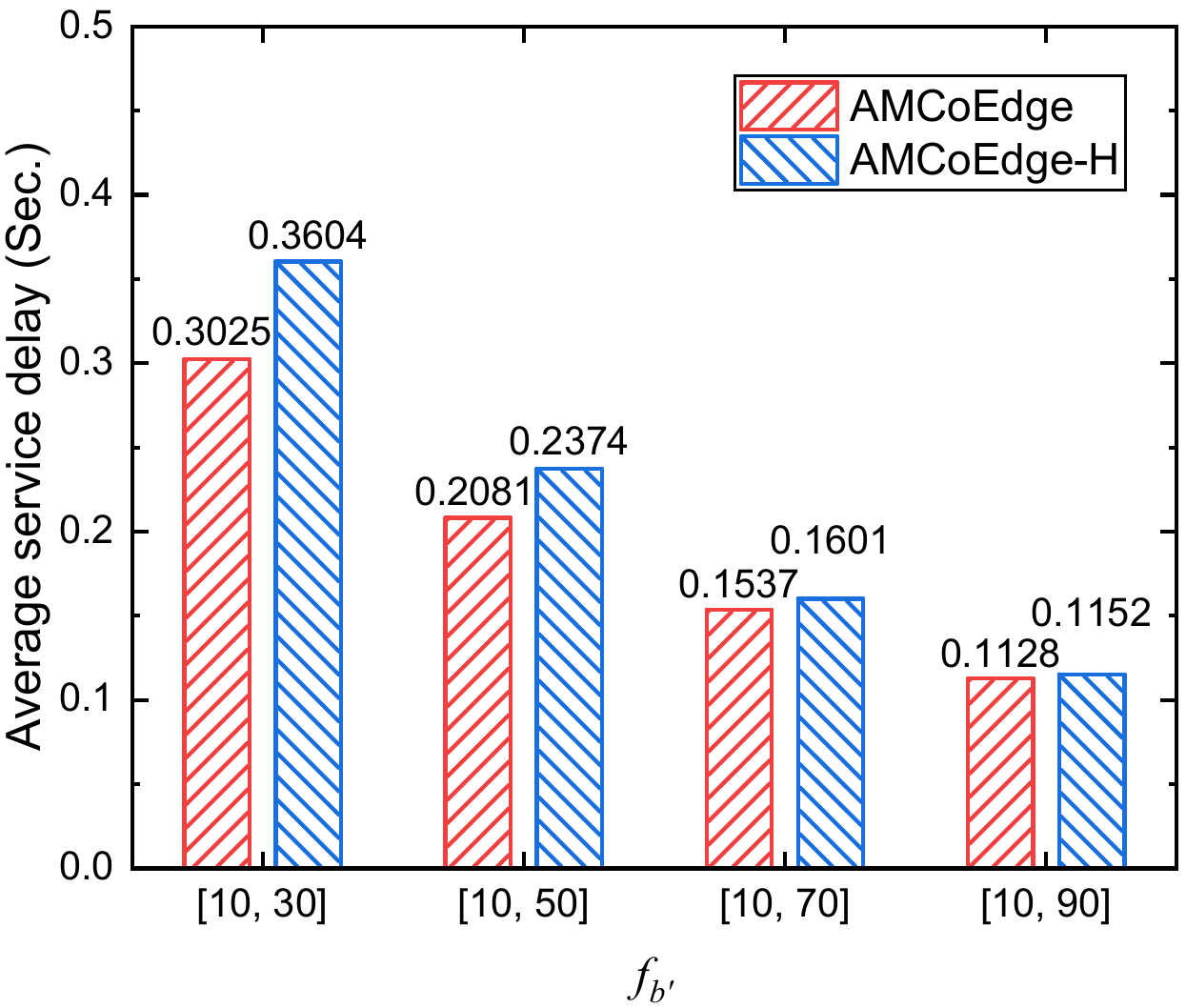}}
\subfigure[Failure rate comparisons]{
    \label{Fig8.sub.2}
    \includegraphics[width=0.48\linewidth]{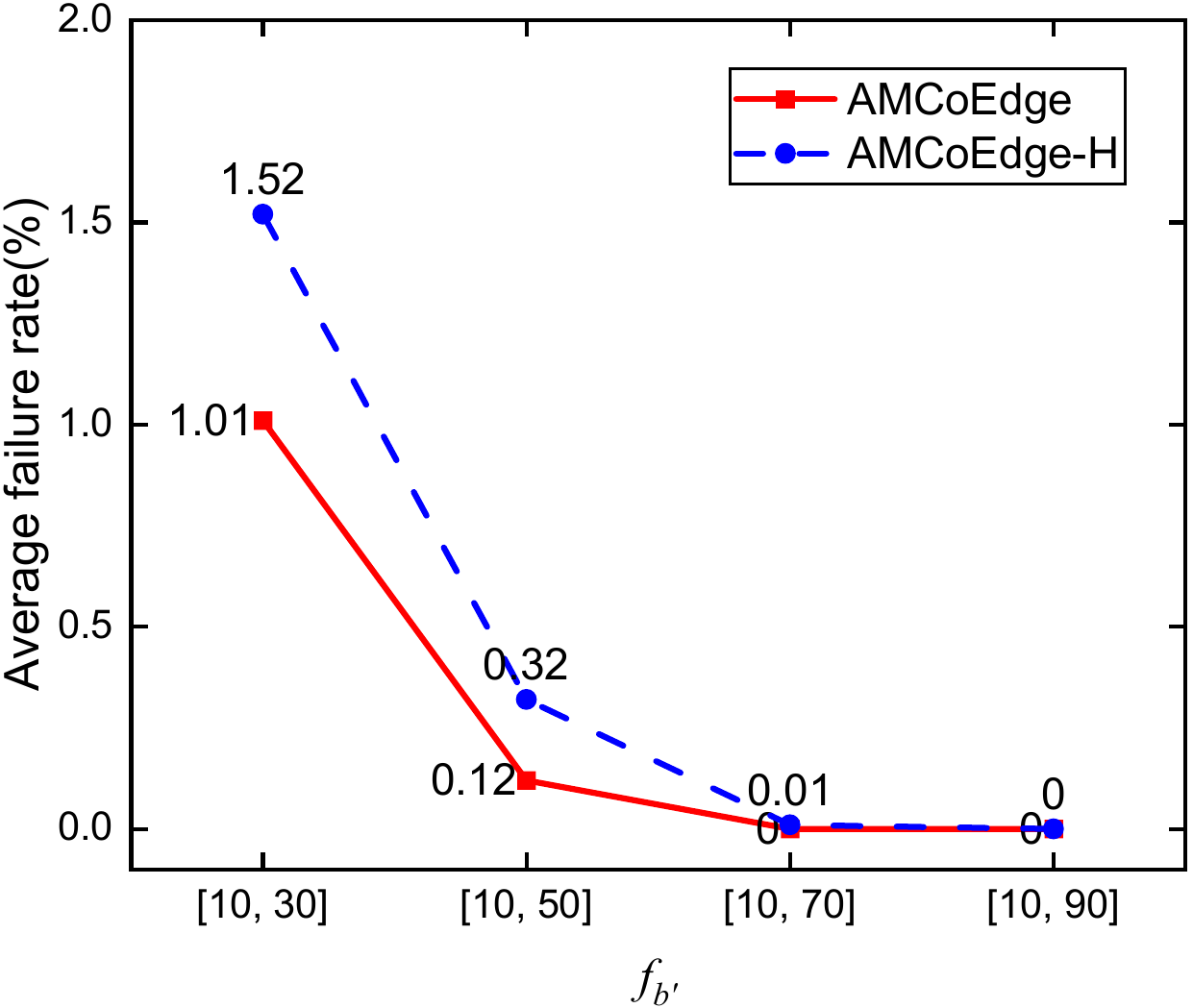}}
\caption{Offloading Make-span and failure rate comparisons of the proposed AMCoEdge and AMCoEdge-H methods by varying the ES's computing capacity $f_{b'}$.}
\label{Fig8}
\end{figure}

\subsubsection{The Cost-effectiveness of HECWA Algorithm}\label{cost-effectiveness subsection}
Compared with our previous conference article \cite{xu2024enhancing}, this section presents a performance comparison of the proposed CWA and HECWA algorithms to validate their effectiveness. The experimental results are shown in Fig. \ref{Fig8}, where the AMCoEdge-H method represents using the HECWA algorithm instead of the CWA algorithm in the Algorithm \ref{AMCoEdge-algorithm}.

From Fig. \ref{Fig8.sub.1}, as the computing capacity $f_{b}$ of ESs increases from $[10, 30]$ to $[10, 90]$, the average offloading make-span of the AMCoEdge-H method is always higher than that of the AMCoEdge method. However, using the HECWA algorithm, the AMCoEdge-H's average make-spans increase by only 0.0064 seconds and 0.0024 seconds, respectively, when the $f_{b}$ is $[10, 70]$ and $[10, 90]$, compared to using the CWA algorithm. Moreover, Fig. \ref{Fig8.sub.2} demonstrates that when the $f_{b}$ equals $[10, 70]$ and $[10, 90]$, the AMCoEdge and AMCoEdge-H methods have nearly equal failure rates. Importantly, by using the HECWA algorithm, the time complexity for the workload allocation is vastly reduced from $O(k^{3})$ to $O(k)$ (seen in \textbf{Theorems} \ref{CWA-complexity-Theorem} and \ref{HECWA-complexity-Theorem}). In addition, we also record the running time (except training time) of our Algorithm \ref{AMCoEdge-algorithm} for a task to \textbf{0.18} milliseconds and \textbf{0.17} milliseconds when the CWA and HECWA algorithms are used, respectively. That is to say, the overall efficiency of the Algorithm \ref{AMCoEdge-algorithm} is improved by \textbf{5.56\%} by using the HECWA algorithm. These results suggest that using the HECWA algorithm is cost-effective, especially for large ES computation capacities. This is because the HECWA algorithm neglects task processing time during workload allocation, resulting in a worse solution than the CWA algorithm. However, as ES capacities increase, waiting time decreases. Then, the HECWA achieves a better solution for the workload allocation stage, thereby reducing the task offloading make-span and failure rate.
 
In summary, the above results provide compelling evidence of our AMCoEdge method’s substantial superiority over state-of-the-art baselines in reducing task offloading, make-span, and failure rate.

\section{Prototype Implementation}\label{implementation}
This section implements a distributed edge prototype system and evaluates the practicality of our AMCoEdge method within it. 

\subsection{AIGC Model Deployment}
\subsubsection{System Implementation} As depicted in Fig. \ref{Fig9}, we develop a distributed edge computing prototype system with five Jetsons consisting of three AGX Orin devices and two Xavier NX devices. The AGX Orin device features an NVIDIA Ampere architecture GPU, 64 GB of memory, 64 Tensor Cores, and 2048 CUDA cores. The Xavier NX device features a NVIDIA Volta architecture GPU with 16 GB of memory, 48 Tensor Cores, and 384 CUDA cores. The five Jetsons serve as the five ESs in the system. Each ES deploys a scheduler integrated with our Algorithm \ref{AMCoEdge-algorithm}. These devices are interconnected via a wired Gigabit local-area network, enabling the system to process AIGC requests in parallel.

\subsubsection{AIGC Model Deployment.} For convenience, we use the SD model \cite{ho2020denoising} as a representative AIGC service. Specifically, the SD 1.5 model \cite{sd1_5} is deployed at each Jetson device, providing AIGC services for UDs.

\begin{figure}[!t]
    \centering
    \includegraphics[width=0.95\linewidth]{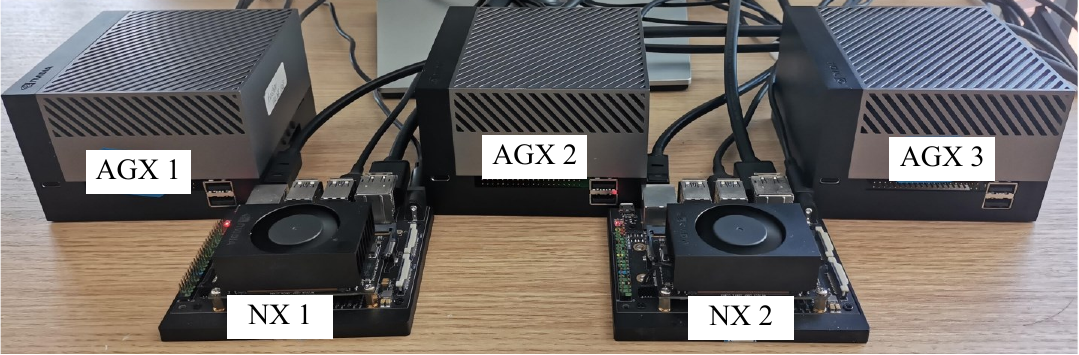}
    \caption{The developed prototype system with our AMCoEdge method.}
    \label{Fig9}
    % \vspace{-0.2cm}
\end{figure}

\subsection{Experimental Design}
To rapidly validate the practicality of our AMCoEdge method, we select the image-repairing task in AIGC services as a representative example. In particular, we use the publicly available \textbf{CelebA-HQ} dataset \cite{liu2015deep}. For each Jetson device $b$ at time slot $t$, $N_{b,t}$ images are randomly selected from the dataset, with a thin or a thick mask, for an AIGC task. Finally, the scheduler on each Jetson device employs our Algorithm \ref{AMCoEdge-algorithm} to offload the task to multiple Jetsons for collaborative processing, i.e., $\{N_{b,t} \cdot x_{b,n,t,b'}\}_{b'\in\mathcal{L}^{b,n}_{t}}$ images are allocated to the selected ESs $\mathcal{L}^{b,n}_{t}$ for parallel processing by the system. Here, each image-repairing process in the task is used as an independent subtask. Since the subtask (i.e., an image) cannot be partitioned, the fractions $\{N_{b,t} \cdot x_{b,n,t,b'}\}_{b'\in\mathcal{L}^{b,n}_{t}}$ of the image allocation must be an integer. We add an integer operation for the workload allocation procedure to satisfy this requirement. Specifically, if the allocation fraction $N_{b,t} \cdot x_{b,n,t,b'}$ equals max$\{N_{b,t} \cdot x_{b,n,t,b'}\}_{b'\in\mathcal{L}^{b,n}_{t}}$, the $N_{b,t} \cdot x_{b,n,t,b'}$ is set to $N_{b,t} - \sum_{v\in\mathcal{L}^{b,n}_{t}\setminus b'} round (N_{b,t} \cdot x_{b,n,t,v})$ and otherwise equals $round (N_{b,t} \cdot x_{b,n,t,b'})$. Here, $round(\cdot)$ is a function that rounds a decimal number to an integer. The variable $d_{n}$ represents the image length. We consider that the image size is $512\times512$ pixels. The deadline for each image repair is 60 seconds; i.e., the task with $N_{b,t}$ images has a deadline of $60 \cdot N_{b,t}$ seconds. We calculate the processing delay $T^{\text{proc}}_{b,n,t,b'}$ with the elapsed time from initiating task transmission to receiving the processed result.

\begin{figure}[!t]
\centering 
\subfigure[Make-span comparisons]{
    \label{Fig10.sub.1}
    \includegraphics[width=0.475\linewidth]{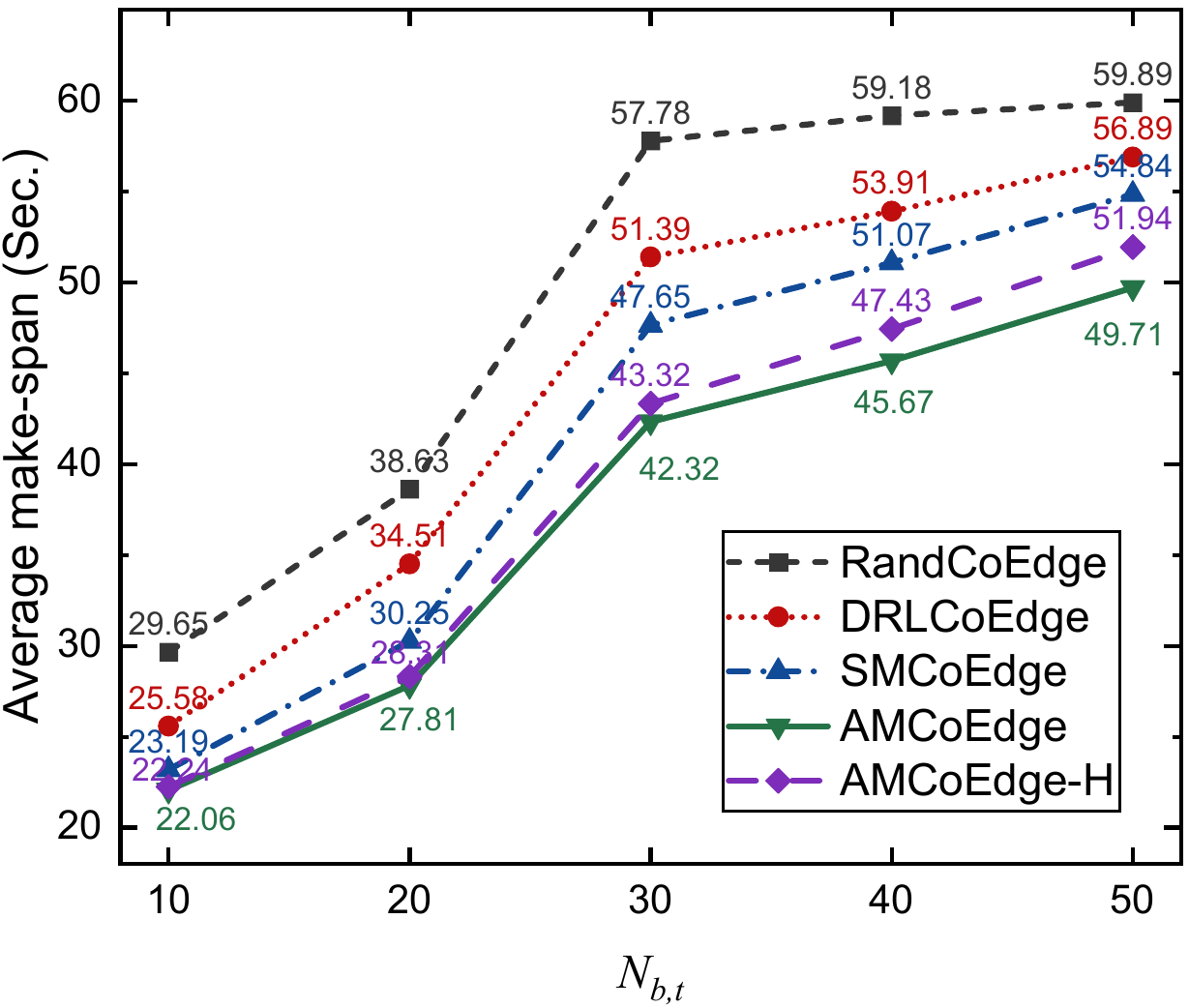}}
\subfigure[Failure rate comparisons]{
    \label{Fig10.sub.2}
    \includegraphics[width=0.48\linewidth]{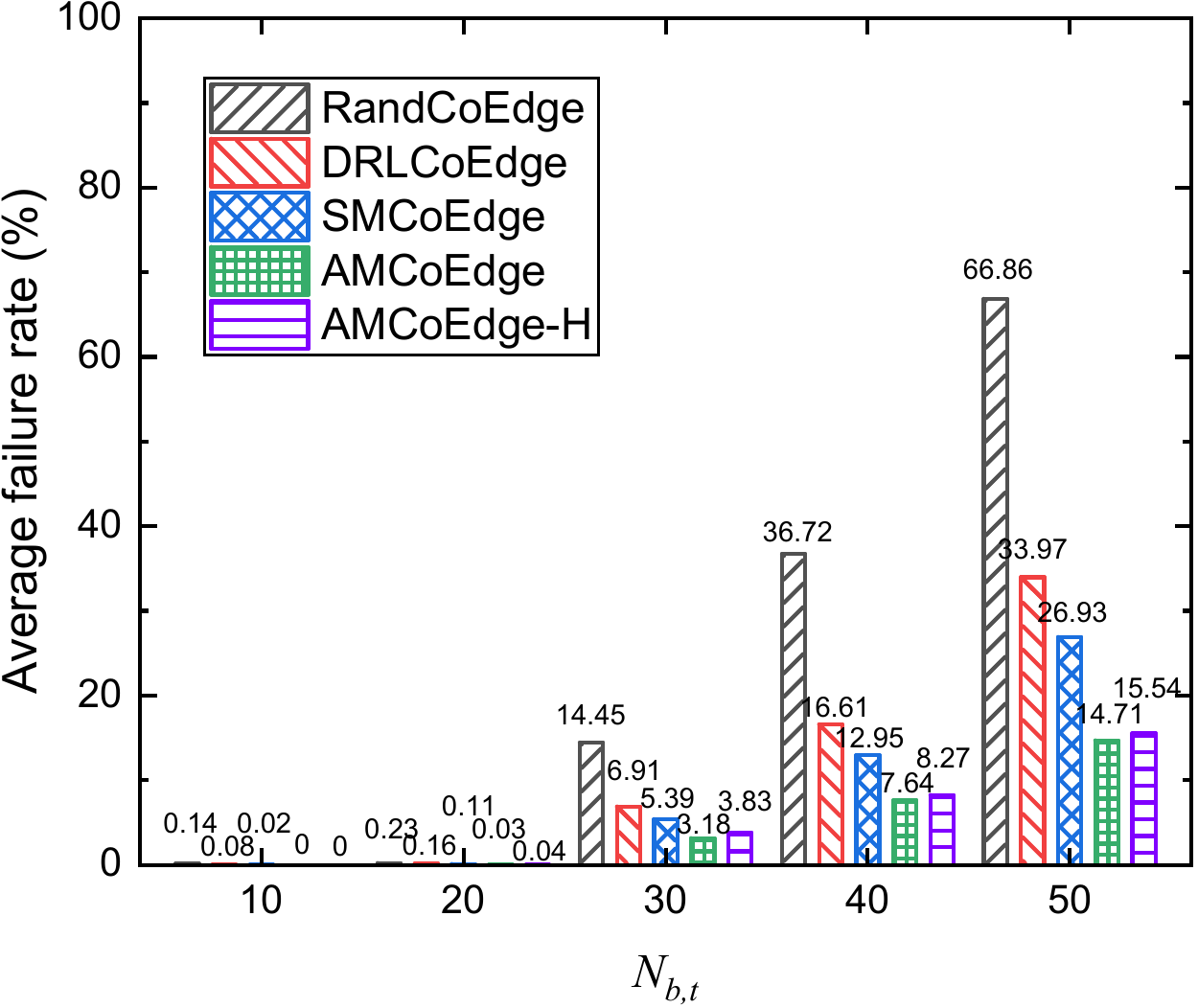}}
\caption{Offloading make-span and failure rate comparisons of our method and baselines by varying the number $N_{b,t}$ of tasks.}
\label{Fig10}
\end{figure}

\subsection{Test-bed Results}\label{test-bed-results}
We compare the average offloading make-spans and failure rates of our AMCoEdge and AMCoEdge-H methods with existing state-of-the-art methods across varying task counts, as detailed in Fig. \ref{Fig10}. Furthermore, partial results from the image-repairing tasks performed by our AMCoEdge method and prototype system are shown in Fig. \ref{Fig11}.

\subsubsection{Comparison Results}
As shown in Fig. \ref{Fig10}, all the methods' make-spans and failure rates increase with the increase of the number of tasks $N_{b,t}$. However, our AMCoEdge and AMCoEdge-H methods significantly outperform the other three methods in terms of average offloading make-span and failure rate across all values of $N_{b,t}$. Specifically, Fig. \ref{Fig10.sub.1} shows that when the $N_{b,t}$ increases from 10 to 50, the average make-spans of our AMCoEdge and AMCoEdge-H methods are increased from 22.06 to 49.71 seconds and 22.24 to 51.94 seconds, respectively. In contrast, the average make-spans of RandCoEdge, DRLCoEdge, and SMCoEdge methods are increased from 29.65 to 59.89 seconds, 25.58 to 56.89 seconds, and 23.19 to 54.84 seconds, respectively, which are higher than that of our AMCoEdge method by the average values \textbf{31.98\%}, \textbf{19.79\%}, and \textbf{9.23\%}, respectively. 

On the other hand, Fig. \ref{Fig10.sub.2} shows that the average failure rates for all methods also increase with the number of tasks. The reason is that as the number of input tasks increases, more tasks will be stored in the queue waiting for processing. Thus, the computation time for each task increases. Additionally, because the computing resources of all ESs are limited, more tasks will be dropped when their completion times exceed the deadline, resulting in more task failures. Besides, Fig. \ref{Fig10} presents that the average make-span and failure rate of the AMCoEdge method always approximate those of the AMCoEdge-H method. This result again validates the cost-effectiveness of the proposed HECWA algorithm.

\subsubsection{Partial repairing Results in the Prototype System}
We also examine the image-repairing quality of our AMCoEdge method in the prototype system. The selective results are depicted in Fig. \ref{Fig11}. We can see that these results meet the requirements of the image-repairing task well, validating the practical applicability of our method in a real system. However, this paper mainly focuses on the improvements in AIGC service efficiency rather than AIGC service quality.

\begin{figure}[!t]
    \centering
    \includegraphics[width=\linewidth]{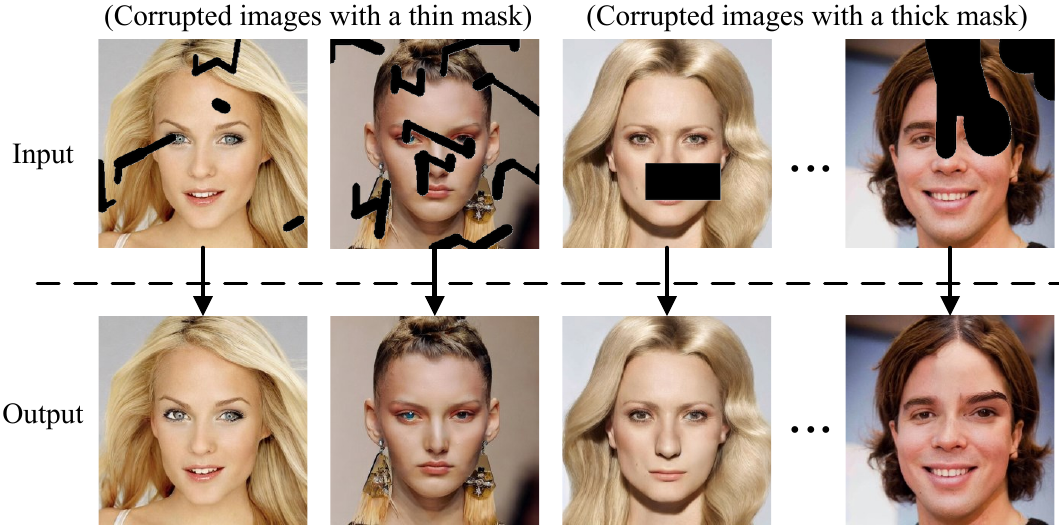}
    \caption{The partial results of image repairing by our method and system.}
    \label{Fig11}
\end{figure}

\section{Discussion}\label{discussion}
See Appendix B \cite{xu2024enhancing, fan2024collaborative, zhou2022two, xu2024dynamic, huang2025drmq, chen2024task}. 

\section{Conclusion}\label{conclusion}
In this paper, we propose integrating collaborative MEC technology to reduce processing delays for AIGC services and present an innovative AMCoEdge method. The AMCoEdge performs adaptive multi-ES selection and online workload allocation to accelerate AIGC service in heterogeneous edge environments. The AMCoEdge method is implemented using a two-stage algorithm that enhances the efficiency of solving our problem using the proposed ADQN model and the CWA method. Meanwhile, we provide theoretical performance analysis to establish that our algorithm achieves a near-optimal solution to our problem with approximate linear-time complexity. Extensive simulation results conclusively demonstrate the superiority of our method over state-of-the-art approaches. Moreover, our method achieves a minimum 11.04\% reduction in task offloading make-span and a minimum 44.86\% decrease in task offloading failure rate. Furthermore, compared to our previous conference article \cite{xu2024enhancing}, we design an HECWA algorithm to quickly achieve the optimal solution for the workload allocation stage and validate HECWA's cost-effectiveness. We also develop a prototype system to implement and evaluate our AMCoEdge and AMCoEdge-H methods. Test-bed results reveal that the AMCoEdge achieves $9.23\% - 31.98\%$ improvements in task offloading make-span over existing representative methods. 

\bibliographystyle{IEEEtran}  % bibliography style
\bibliography{references} % bibliography references

\begin{IEEEbiography}
[{\includegraphics[width=1in,height=1.25in,clip,keepaspectratio]{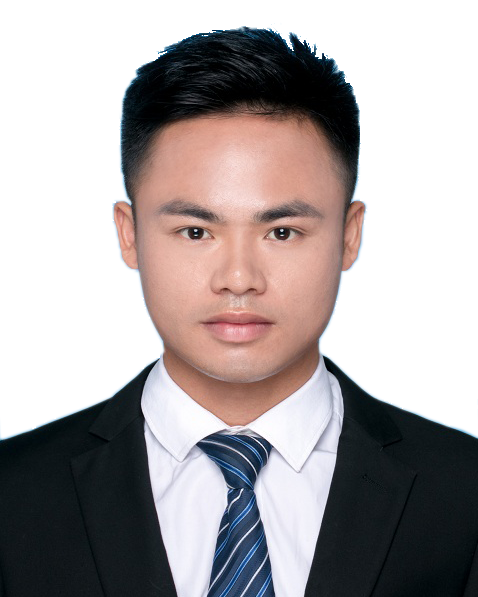}}]
{Changfu Xu}
(Member, IEEE) received the B.S. degree in Communication Engineering and the M.S. degree in Software Engineering from Jiangxi University of Finance and Economics in 2015 and 2018, respectively. He received his Ph.D. degree in Computer Science and Technology from Hong Kong Baptist University, Hong Kong, China, in 2025. Currently, he is an Assistant Professor with the School of Software and IoT Engineering, Jiangxi University of Finance and Economics, Nanchang, China. His main research interests include edge computing and AIGC. He has published more than 10 papers.
\end{IEEEbiography}

\begin{IEEEbiography}
[{\includegraphics[width=1in,height=1.25in,clip,keepaspectratio]{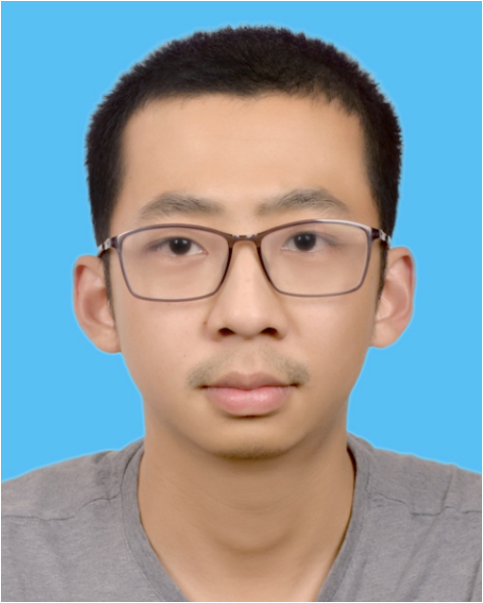}}]
{Jianxiong Guo}
(Member, IEEE) received his Ph.D. degree from the Department of Computer Science, The University of Texas at Dallas, in 2021, and his B.E. degree from the School of Chemistry and Chemical Engineering, South China University of Technology, in 2015. He is currently an associate professor at the Institute of AI and Future Networks, Beijing Normal University, China, and also at the Guangdong Key Lab of AI and Multi-Modal Data Processing, Beijing Normal-Hong Kong Baptist University, China. He is a member of IEEE/ACM/CCF. He has published more than 120 peer-reviewed papers and has served as a reviewer for numerous international journals/conferences. His research interests include edge intelligence, IoT, distributed machine learning, and online/combinatorial optimization.
\end{IEEEbiography}

\begin{IEEEbiography}
[{\includegraphics[width=1in,height=1.25in,clip,keepaspectratio]{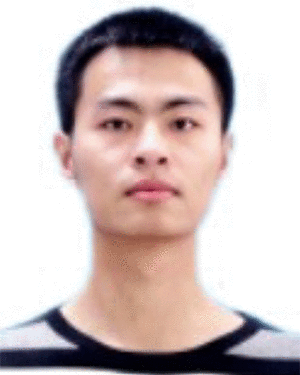}}]
{Jiandian Zeng}
(Member, IEEE) received the B.Sc. Degree in computer science and technology from Jianghan University, Wuhan, China, in 2014, and the M.Sc degree in computer technology from Huaqiao University, Xiamen, China, in 2018. He received the Ph.D. degree in Computer and Information Science from University of Macau, Macau, China, in 2023. Currently, he is a Lecturer with school of Computer Science and Technology, Beijing Normal University, Zhuhai, China. His research interests include natural language processing and mobile computing.
\end{IEEEbiography}

\begin{IEEEbiography}
[{\includegraphics[width=1in,height=1.25in,clip,keepaspectratio]{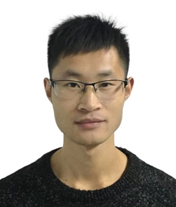}}]
{Houming Qiu} received the Ph.D. degree from the College of Computer Science and Technology, Nanjing University of Aeronautics and Astronautics, China. Currently, he is a Lecturer with the School of Software and IoT Engineering, Jiangxi University of Finance and Economics, Nanchang, China. His research interests include distributed computing, coding theory, and secure/private computing.
\end{IEEEbiography}

\begin{IEEEbiography}
[{\includegraphics[width=1in,height=1.25in,clip,keepaspectratio]{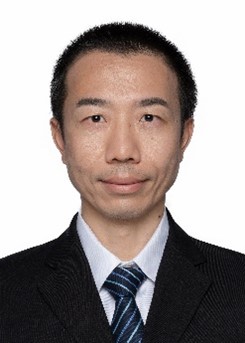}}]
{Tian Wang}
(Senior Member, IEEE) received his BSc and MSc degrees in Computer Science from Central South University in 2004 and 2007, respectively. He received his Ph.D. degree in Computer Science from the City University of Hong Kong in 2011. Currently, he is a professor at the Institute of Artificial Intelligence and Future Networks, Beijing Normal University. His research interests include Internet of Things, edge computing, and mobile computing. He has 30 patents and has published more than 200 papers in high-level journals and conferences. He was a co-recipient of the Best Paper Runner-up Award of IEEE/ACM IWQoS 2024. He has more than 17000 citations, according to Google Scholar. His H-index is 71.
\end{IEEEbiography}

\begin{IEEEbiography}
[{\includegraphics[width=1in,height=1.25in,clip,keepaspectratio]{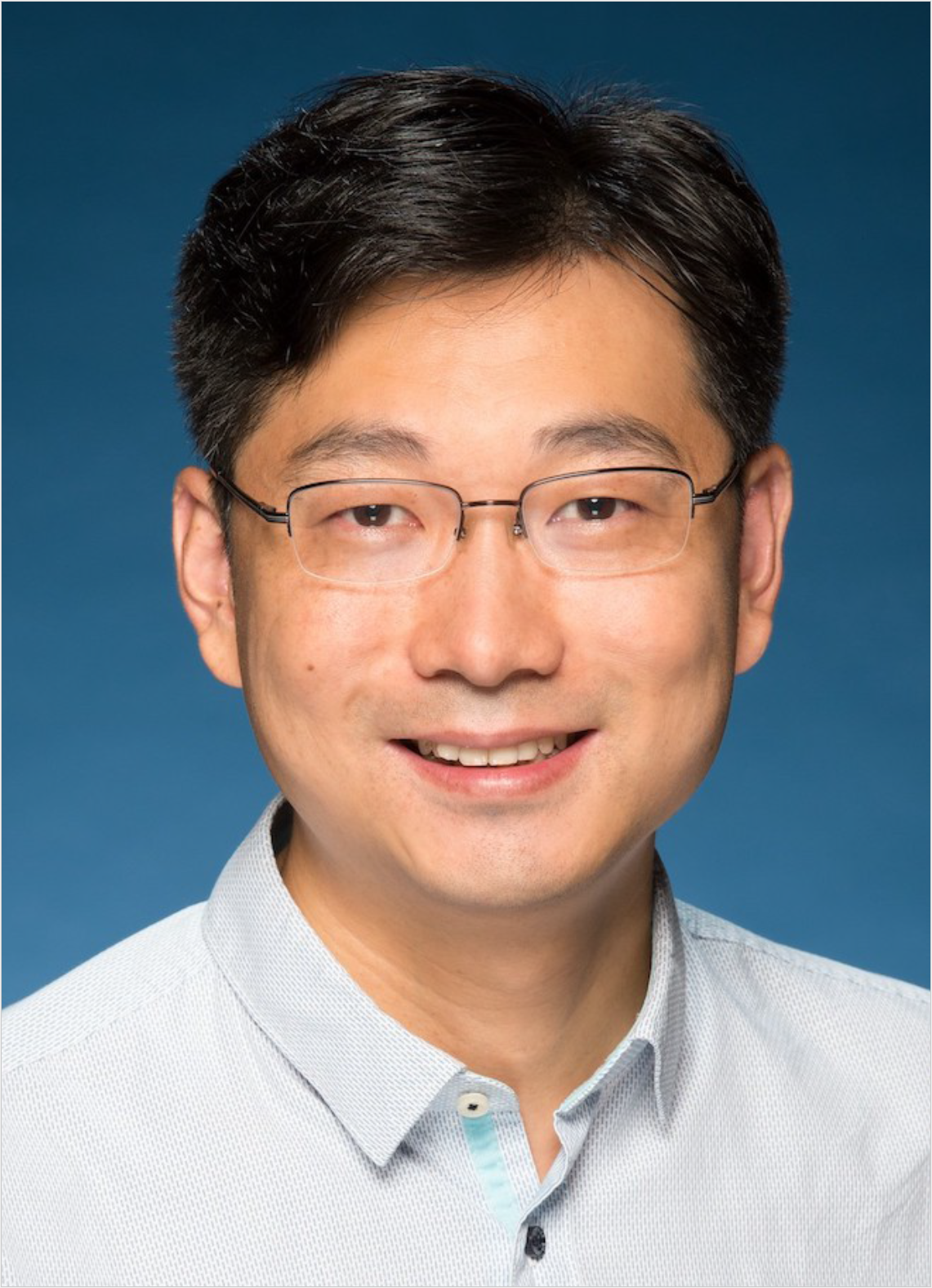}}]
{Xiaowen Chu}
(Fellow, IEEE) received the B.Eng. degree in computer science from Tsinghua University, Beijing, China, in 1999, and the Ph.D. degree in computer science from The Hong Kong University of Science and Technology, Hong Kong, in 2003. He is currently a Full Professor and Head of Data Science and Analytics Thrust at The Hong Kong University of Science and Technology (Guangzhou). His research interests include GPU Computing, Distributed Machine Learning, Cloud Computing, and Wireless Networks. He has won six Best Paper Awards at different international conferences, including IEEE INFOCOM 2021. He has published over 240 research articles at international journals and conference proceedings. He has served as an associate editor or guest editor of IEEE Transactions on Cloud Computing, IEEE Transactions on Network Science and Engineering, IEEE Transactions on Big Data, IEEE IoT Journal, IEEE Network, IEEE Transactions on Industrial Informatics, etc. 
\end{IEEEbiography}

\begin{IEEEbiography}
[{\includegraphics[width=1in,height=1.25in,clip,keepaspectratio]{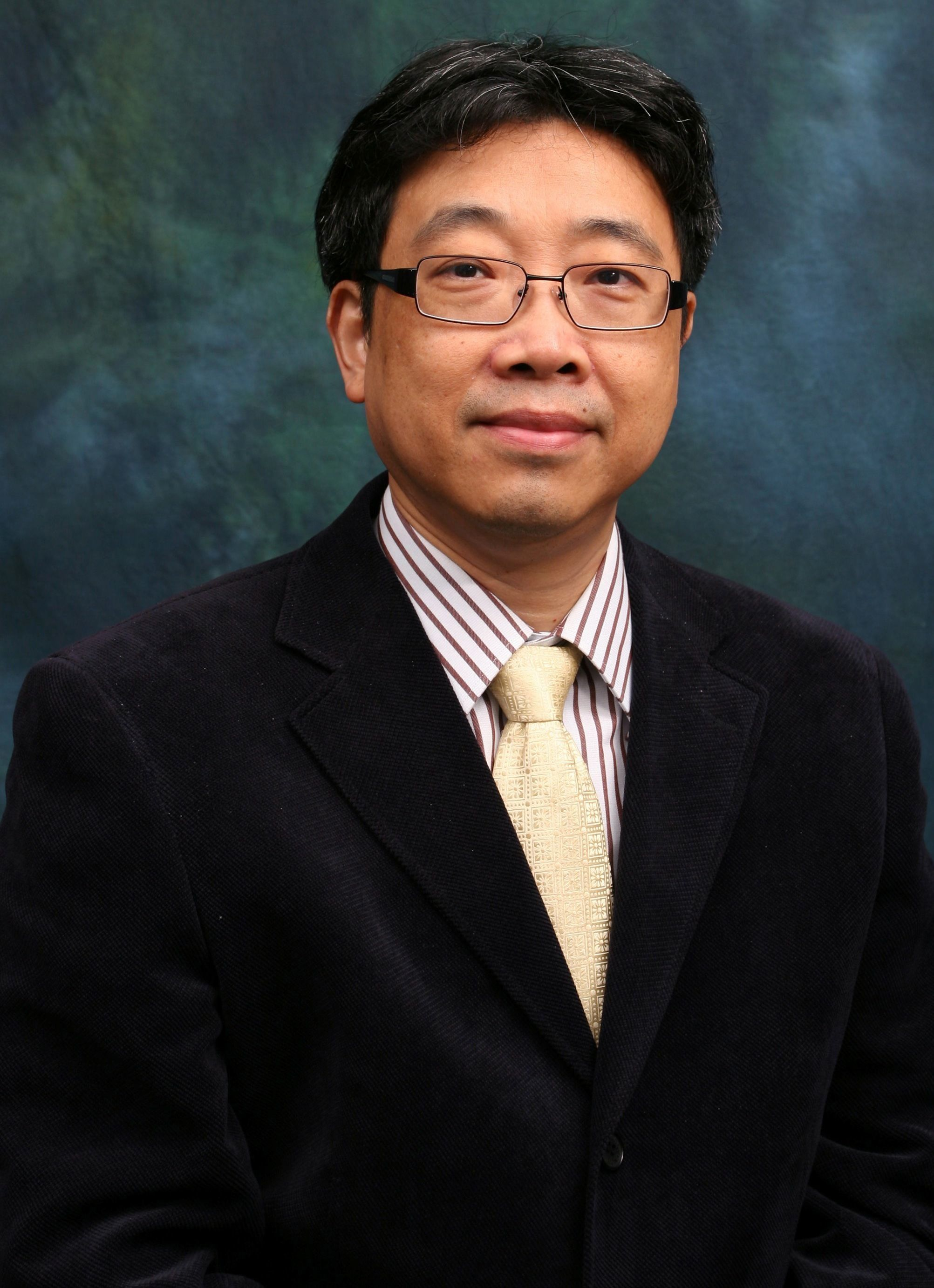}}]
{Jiannong Cao}
(Fellow, IEEE) received the B.Sc. degree in computer science from Nanjing University, China, in 1982, and the M.Sc. and Ph.D. degrees in computer science from Washington State University, USA, in 1986 and 1990, respectively. He is currently the Otto Poon Charitable Foundation Professor in data science and the Chair Professor of distributed and mobile computing with the Department of Computing, Hong Kong Polytechnic
University, Hong Kong. He is also the Director of the Internet and Mobile Computing Lab at the Department and the Associate Director of the University Research Facility in big data analytics. His research interests include parallel and distributed computing, wireless networks and mobile computing, big data and cloud computing, pervasive computing, and fault-tolerant computing. He has coauthored five books in Mobile Computing and Wireless Sensor Networks, co-edited nine books, and published over 600 papers in major international journals and conference proceedings. He is a Distinguished Member of ACM and a Senior Member of the China Computer Federation (CCF).
\end{IEEEbiography}

% \newpage
\appendices
\section{Tables}
The Tables \ref{table1} and \ref{table2} of this paper are listed in this appendix.
\begin{table*}[!t]
    \caption{Comparison of ours and current representative methods}
    \centering
    \resizebox{\textwidth}{!}{
    \begin{tabular}{m{2.0cm}<{} m{2.5cm}<{\centering}m{2.5cm}<{\centering}m{2.0cm}<{\centering}m{2.5cm}<{\centering}m{2.5cm}<{\centering}}
        \toprule
        Method & Cloud-enabled AIGC Service & Edge-enabled AIGC Service & Single-ES Offloading & Fixed Multi-ES Offloading & Adaptive Multi-ES Offloading\\
        \midrule
            Van \textit{et al.} \cite{van2023chatgpt} &\checkmark &\ding{53} &\ding{53} &\ding{53} &\ding{53} \\
            Du \textit{et al.} \cite{du2024exploring} &\ding{53} &\checkmark &\checkmark &\ding{53} &\ding{53} \\
            Xu \textit{et al.} \cite{xu2024dynamic} &\ding{53} &\ding{53} &\ding{53} &\checkmark &\ding{53} \\
            Ours &\ding{53} &\checkmark &\ding{53} &\ding{53} &\checkmark \\
        \bottomrule
    \end{tabular}}
    \label{table1}
\end{table*}

\begin{table}[!h]
    \caption{Commonly used notations in this paper}
    \centering
    \resizebox{\linewidth}{!}{
	\begin{tabular}{p{30pt}<{\centering}p{190pt}}
	  \toprule
	Symbol & Description\\
	\midrule
  	$\mathcal{T}$ & The set of time slot\\
	$\mathcal{B}$ & The set of ESs or BSs\\
	$\mathcal{N}_{b,t}$ & The task set that arrives to BS $b\in\mathcal{B}$ at time slot $t\in \mathcal{T}$\\
	$\Delta$ & The length of each time slot $t \in \mathcal{T}$\\
        $\tau_{n}$ & The deadline of processing task $n \in \mathcal{N}_{b,t}$\\
	$d_{n}$ & The size of task $n \in \mathcal{N}_{b,t}$ \\
        $\rho_{n}$ & The computation density of task $n \in \mathcal{N}_{b,t}$\\ 
        $\boldsymbol{x}$ & The decision variable of task workload allocation\\
	$\boldsymbol{v}$ & The transmission rate between BSs\\
        $f_{b'}$ & The computing capacity of ES $b' \in \mathcal{B}$ \\
        $T_{b,n,t,b'}^{\text{proc}}$ & The processing delay of task $n\in \mathcal{N}_{b,t}$ that is allocated from BS $b\in\mathcal{B}$ to ES $b'\in\mathcal{B}$ at time slot $t\in \mathcal{T}$\\
        $T_{b,n,t,b'}^{\text{wait}}$ & The waiting time of task $n$ in the transmission queue at local BS $b$ and the processing queue at ES $b'$\\
        $T_{b,n,t}^{\text{make}}$ & The offloading make-span of task $n\in \mathcal{N}_{b,t}$ that arrives to a BS $b\in\mathcal{B}$ at time slot $t\in \mathcal{T}$\\
        $q_{t,b'}$ & The processing queue workload length of ES $b'$ at the end of time slot $t-1$\\
        $q^{\text{tran,bef}}_{b,n,t}$ & The data size of the transmission queue at ES $b$ before sending out task $n$ at time slot $t$\\
        $q^{\text{proc,bef}}_{n,t,b'}$ & The workload length of the processing queue at ES $b'$ before receiving task $n$ at time slot $t$.\\
	\bottomrule
	\end{tabular}}
	\label{table2}
\end{table}

\section{Discussion}
Compared to our previous conference version \cite{xu2024enhancing}, we further discuss the rationale, scalability, and limitations behind our AMCoEdge method.

\subsubsection{Rationale} This paper presents a novel AMCoEdge method. The AMCoEdge performs online adaptive multi-ES collaboration computing to optimize the task offloading delay for AIGC services. The experimental results presented in Sections \ref{evaluation} and \ref{implementation} demonstrate that the AMCoEdge significantly outperforms the state-of-the-art methods in terms of the average offloading make-span and failure rate. The main reason for this improvement is the AMCoEdge, which adaptively leverages the computational capabilities of multiple ESs to accelerate task processing. Additionally, the AMCoEdge allocates task workloads to match all the ESs with idle computing resources. In contrast, existing collaborative MEC methods (e.g., \cite{fan2024collaborative, zhou2022two, xu2024dynamic,huang2025drmq}) only select fixed ESs for task acceleration. These methods often prolong the overall make-span of task offloading due to the under-utilization of ES resources.

On the other hand, compared to our previous conference version \cite{xu2024enhancing}, we propose the AMCoEdge-H method by designing a more efficient HECWA algorithm for the workload allocation stage. Although the proposed HECWA algorithm has little effect on the task offloading make-span and failure rate by neglecting the waiting time of task processing during workload allocation, it significantly improves the time complexity of the workload allocation algorithm and the running time of the overall AMCoEdge algorithm, as presented in Subsection \ref{cost-effectiveness subsection}, especially for large ES computation capacity.

\subsubsection{Scalability}
Our AMCoEdge method exhibits scalability. In particular, by retraining our model to suit new scenarios, the AMCoEdge can be effectively extended for a wide range of modern IoT applications. For instance, our AMCoEdge method can be applied to many real-world AIGC applications, such as text-to-image, image-to-image (as depicted in Figs. \ref{Fig1} and \ref{Fig11}), and text-to-video tasks.

\subsubsection{Limitation}
Our AMCoEdge method may have limitations that can be improved. For example, the pricing and energy consumption models are also important issues in modern IoT applications \cite{chen2024task}. It should be noted that AMCoEdge does not account for the different pricing and energy-consumption models among AIGC service providers.

\section{Proof of the Theorem \ref{NP-hard-theorem}}\label{NP-hard-theorem-proof}
\begin{proof}
The offline counterpart of the AM-CMEC problem is proven to be NP-hard through reducing the multi-knapsack problem \cite{gu2021layer} to it. Firstly, in the AM-CMEC  problem, all the task workloads $\{d_{n} \cdot \rho_{n} \cdot x_{b,n,t,b'} \}$ are viewed as candidate objects by relaxing the $\boldsymbol{x}$ as integers. The set $\mathcal{B}$ of ESs is viewed as a multi-knapsack. Hence, the task workloads assigned to the ES set in the AM-CMEC problem are viewed as candidate objects distributed across multiple knapsacks. The weight capacity of knapsack $b' \in \mathcal{B}$ is $f_{b'}$. More task workloads with lower make-spans are placed into the ES set $\mathcal{B}$ under the constraint of weight capacity $\boldsymbol{f}$. Thus, the AM-CMEC  problem is a multi-knapsack problem. Secondly, as a general case of the knapsack problem that is well-known NP-hard, the multi-knapsack problem is also an NP-hard problem. Therefore, the offline counterpart of the AM-CMEC problem is NP-hard.
\end{proof}

\section{Proof of the Theorem \ref{CWA-complexity-Theorem}}\label{CWA-complexity-Theorem-proof}
\begin{proof}
As shown in Algorithm \ref{CWA-procedure}, there is one ``for'' loop. In this loop, the time complexity of execution is $O(|\mathcal{L}^{b,n}_t|) \leq O(B) $. Then, if $k = 1$, the time complexity of execution is $O(1)$. Otherwise, the time complexity of execution is $O(|\mathcal{L}^{b,n}_t|)$ + $O(k^{3})$ which is the time complexity of solving the equations \cite{huang2022throughput} (Line 12). Furthermore, when $k$ > 1, the total time complexity of Algorithm \ref{CWA-procedure} is $O(k^3)$ since $|\mathcal{L}^{b,n}_t| \ll O(k^{3})$.
\end{proof}

\section{Proof of the Theorem \ref{effectiveness-Theorem}}\label{effectiveness-Theorem-proof}
\begin{proof}
This theorem can be proved by mathematical induction as follows. Firstly, when $k$ = 1, one ES is only selected to process the entire task $n$. We assume that the index of selected ES is $b'$, i.e., $\mathcal{L}^{b,n}_{t} = \{b'\}$. We have $x_{b,n,t,b'} = 1$, i.e., when $k$ = 1, the Eqn. (\ref{closed-form-formula}) is true. Secondly, when $k$ = 2, two ESs are only selected to process the entire task $n$ collaboratively. We assume that the indexes of selected ES are $b'$ and $i$, i.e., $\mathcal{L}^{b,n}_{t} = \{b', i\}$. Then, the $x_{b,n,t,b'}$ and $x_{b,n,t,i}$ are optimally achieved by the Eqn. (\ref{Eq-x-b-solution}) and (\ref{Eq-x-i-solution}). Obviously, when the $T^{\text{wait}}_{b,n,t,b'}$ and $T^{\text{wait}}_{b,n,t,i}$ equal to 0, the Eqn. (\ref{Eq-x-b-solution}) and (\ref{Eq-x-i-solution}) satisfy the Eqn. (\ref{closed-form-formula}). Therefore, when $k$ = 2, the Eqn. (\ref{closed-form-formula}) is also true. Thirdly, we prove that assuming $k$ = $m$ is true for the Eqn. (\ref{closed-form-formula}) and then $ k = m+1$ is also true for the Eqn. (\ref{closed-form-formula}) as follows. In particular, when $k$ = $m$ ($3 \leq m \leq B$), the $m$ ESs are selected to collaboratively process the entire task $n$. Here, we assume that the indexes set of the selected ES is $\mathcal{M}$, i.e., $\mathcal{L}^{b,n}_{t} = \mathcal{M}$. We have
\begin{equation*}
    x_{b,n,t,b'} = \frac{\prod_{u\in\mathcal{M}\setminus b'} T_{b,n,t,u}}{\sum_{v\in\mathcal{M}}\prod_{u\in\mathcal{M}\setminus v} T_{b,n,t,u}}.
\end{equation*}
Furthermore, when $ k = m + 1$, $m + 1$ ESs are simultaneously selected to process the entire task $n$ collaboratively. Assuming the index set $\mathcal{L}^{b,n}_{t}$ of selected ES includes $\mathcal{M}$ and $j$, i.e., $\mathcal{L}^{b,n}_{t} = \mathcal{M} \cup \{j\}$. We have 
\begin{align}
    x_{b,n,t,b'} & = \frac{\prod_{u\in\mathcal{M}\cup \{j\}\setminus b'} T_{b,n,t,u}}{\sum_{v\in\mathcal{M}\cup \{j\}}\prod_{u\in\mathcal{M}\cup \{j\}\setminus v} T_{b,n,t,u}} \nonumber\\
    & = \frac{\prod_{u\in\mathcal{L}^{b,n}_{t}\setminus b'} T_{b,n,t,u}}{\sum_{v\in\mathcal{L}^{b,n}_{t}}\prod_{u\in\mathcal{L}^{b,n}_{t}\setminus v} T_{b,n,t,u}},\nonumber
\end{align}
which shows that when $k$ = $m+1$, the Eqn. (\ref{closed-form-formula}) is also true. Finally, by the above derivation, the Eqn. (\ref{closed-form-formula}) is proved.
\end{proof}

\section{Proof of the Theorem \ref{HECWA-complexity-Theorem}}\label{HECWA-complexity-Theorem-proof}
\begin{proof}
As shown in Algorithm \ref{HECWA-algorithm}, for each ``for'' loop, the time complexity of execution is $O(k)$. Therefore, the time complexity of Algorithm \ref{HECWA-algorithm} is $O(k)$ + $O(k)$ = $O(k)$.
\end{proof}

\section{Proof of the Theorem \ref{AMCoEdge-complexity-Theorem}}\label{AMCoEdge-complexity-Theorem-proof}
\begin{proof}
As shown in Algorithm \ref{AMCoEdge-algorithm}, for each time slot $t \in \mathcal{T}$, there are two ``for'' loops ordered. In the first loop, since all BSs are executed in parallel, the time complexity is determined by the number $N_{b,t}$ of task and the solving time of ES selection and workload allocation. Furthermore, the time required to solve the ES selection problem is determined by the AEQN inference time, which is very quick since the AEQN model is pre-trained. Then, the time complexity of ES selection is approximately $O(1)$. The solving time of workload allocation is $O(k^3)$ or $O(k)$ as proved in \textbf{Theorem} \ref{CWA-complexity-Theorem} or \ref{HECWA-complexity-Theorem}. Thus, the maximal time complexity of the first ``for'' loop is $O(N_{b,t} \cdot (1+k^{3}))$ = $O(N_{b,t} \cdot k^{3})$. In the second loop, the time complexity is $O(1)$ because it is executed in parallel. Thus, the total time complexity of Algorithm \ref{AMCoEdge-algorithm} is also $O(N_{b,t} \cdot k^{3})$ for each time slot $t$. However, since $1 \leq k \leq B$ and $k$ is usually a small number, the time complexity of Algorithm \ref{AMCoEdge-algorithm} is approximately equal to $O(N_{b,t})$.
\end{proof}

\end{document}